\newtheorem{thm}{Theorem}
\newtheorem{prop}{Proposition}
\def\n{\mathfrak{n}}
\def\bZ{\mathbb{Z}}
\def\g{\mathfrak{g}}
\def\h{\mathfrak{h}}
\def\Z22{\mathbb{Z}_{2}^2}
\def\nn{\nonumber}
\def\ol{\overline}
\title{A classification of lowest weight irreducible modules over $\Z22$-graded extension of $osp(1|2)$}
\author{K. Amakawa, N. Aizawa\footnote{Corresponding author: aizawa@p.s.osakafu-u.ac.jp}
	\\[10pt]
Department of Physical Science, Osaka Prefecture University, \\
Nakamozu Campus, Sakai, Osaka 599-8531, Japan}
\date{March 11, 2021}
\begin{document}
	\maketitle
	\thispagestyle{empty}
	
	\vfill
	\begin{abstract}
		We investigate representations of the $\mathbb{Z}_2^2$-graded extension of $osp(1|2)$ which is the spectrum generating algebra of the recently introduced $\mathbb{Z}_2^2$-graded version of superconformal mechanics. The main result is a classification of irreducible lowest weight modules of the $\mathbb{Z}_2^2$-graded extension of $osp(1|2)$. 
		This is done via introduction of  Verma modules and its maximal invariant submodule generated by singular vectors.  Explicit formula of all singular vectors are also presented. 
	\end{abstract}

	\clearpage
	\setcounter{page}{1}

 \section{Introduction}
 
Conformal mechanics introduced by de Alfaro, Fubini,  Furlan \cite{DFF} and its supersymmetric extensions 
have been studied extensively in theoretical and mathematical physics (see, e.g., \cite{BelKri,SCM1,SCM2} and references therein).  
Recently, it was shown that many models of superconformal mechanics discussed in the literature can be 
generalized to $ \Z22 := \mathbb{Z}_2 \times \mathbb{Z}_2$ setting \cite{AAD} 
where the superconformal algebras characterizing the models were replaced with their $ \Z22$-graded extension. 
As a result, the $\Z22$-graded version of superconformal mechanics has a richer structure than the standard one. 

The $\Z22$-graded superconformal algebra is an example of the $\Z22$-graded Lie superalgebras introduced by 
Rittenberg and Wyler \cite{rw1,rw2}.   
In fact, the algebra of Rittenberg-Wyler is a special case of the more general algebras considered  by Ree.
The Ree's algebra is referred to as colour Lie algebra or $\epsilon$-Lie algebra \cite{Ree} (see also \cite{sch1,sch2}) and 
the colour Lie algebras are defined on a vector space which is graded by an Abelian group $A$. 
In the case $ A = \mathbb{Z}_2,$ one may recover the  standard Lie superalgebras, 
thus the color Lie algebras are a natural generalization of Lie superalgebra. 
In the more general case of 
$ A = \mathbb{Z}_2^n := \mathbb{Z}_2 \times \mathbb{Z}_2 \times \cdots \times \mathbb{Z}_2 $ ($n$ copies of $\mathbb{Z}_2$), 
which is the one considered in \cite{rw1,rw2}, the colour Lie algebras are realized  in terms of commutator and anticommutator 
(a more precise definition of $ A = \Z22$ case is found in Appendix A), 
so that they may generate a continuous symmetry as standard Lie (super)algebras do. 
The continuous symmetries, generated by the algebra of higher gradation than $\mathbb{Z}_2$,  
were considered in several physical problems \cite{LuRi,vas,jyw,zhe,LR,Toro1,Toro2,tol2,tol}.
Furthermore, the recent  works recognize such higher graded  symmetries in 
more fundamental level, namely, 
non-relativistic wave equations for spinors \cite{AKTT1,AKTT2}, 
extensions of various models of supersymmetric theories \cite{AAD,Bruce,BruDup,AAd2,AKTcl,AKTqu,brusigma,Topp}. 
This implies that the higher graded symmetries should be studied more seriously. 

In order to explore symmetry, we need representation theories of the algebra generating the symmetry. 
However, the present knowledge of representations of colour Lie algebras is far from completion 
since only the limited number of works has been done so far  \cite{ChSiVO,SigSil,MohSal,NAJap,NAJS,NAPIJS,StoVDJ,Meyer}. 
Motivated by this situation, in the present work, we study the lowest weight Verma modules over a $\Z22$-graded extension of 
$ osp(1|2) $ ($\Z22$-$osp(1|2)$ in short) 
and present a classification of its irreducible lowest weight modules. 
This is done by the method similar to the standard Lie theory. 
As a result we classify $\Z22$-graded modules with a lowest weight of degree $(0,0)$. 
This approach is legitimated  since we show that the following two facts are also true for  $\Z22$-$osp(1|2)$: 
(i) the $\Z22$-graded Verma module has a unique maximal invariant submodule 
and the quotient of the Verma module by the maximal invariant module is irreducible, 
(ii) the maximal invariant submodule is generated by singular vectors. 
Thus the crucial step to the classification is the explicit construction of the singular vectors. 
This is carried out by an elementary method and we present the explicit formula of all singular vectors in the given Verma module.

The reason for choosing the $\Z22$-$osp(1|2)$ is two fold. 
First of all, the superalgebra $ osp(1|2)$ is the most fundamental Lie superalgebra. 
It plays the same role as $ sl(2) $ in the classical Lie algebras. 
Thus one may think that $\Z22$-$osp(1|2)$ plays the similar role in $\Z22$-graded Lie superalgebras. 
If so,  studying its representations in detail will be a basis for further development of the representation theories of 
more general $\Z22$-graded Lie superalgebras.  

The second reason is  physical applications. 
The $\Z22$-$osp(1|2)$ gives the simplest example of the $\Z22$-graded version of superconformal mechanics. 
In \cite{AAD}, the standard procedure of the conformal mechanics is employed to compute energy eigenvalues of the model. 
The energy spectrum is discrete and, reflecting the $\Z22$-grading, 
the ground state is doubly degenerate. 
The energy spectrum corresponds to an infinite dimensional $\Z22$-$osp(1|2)$-module with the degenerate lowest weight vectors.    
It is known that the standard ($\mathbb{Z}_2$-graded) Lie superalgebra $osp(1|2)$ has finite dimensional lowest weight 
modules of odd dimension in addition to the infinite dimensional ones.   
Then concerning the results in \cite{AAD} one may pose the following  question:  
does the $\Z22$-$osp(1|2)$ also has finite dimensional modules or 
additional modules which have no counterparts in the standard $osp(1|2)$ ? 
In the present work, we positively answer the question by showing  the existence of finite dimensional and 
another infinite dimensional irreducible modules over the $\Z22$-$osp(1|2)$ where the lowest weight has no degeneracy. 
Therefore, structure of the lowest weight modules over the  $\Z22$-$osp(1|2)$ is richer than the standard $osp(1|2).$ 
Since the both finite and infinite dimensional representations of $ osp(1|2)$ are relevant to various physical problems, 
one may expect more physical applications of $\Z22$-$osp(1|2)$. 

Before discussing the classification of the irreducible modules, 
we mention that various mathematical studies on color Lie algebras besides the representation theory  
have been undertaken since their introduction. 
Algebraic aspects of them are investigated from various directions 
\cite{sch3,SchZha,Sil,PionSil,CART,NAcl,Meyer2,IsStvdJ,NAPSIJSinf}. 
One of the recent hot topics is the geometry of $\mathbb{Z}_2^n$-graded manifolds 
which is an extension of the geometry of supermanifolds 
\cite{CGP1,CGP2,CGP3,CGP4,Pon,CoKPo,BruIbar,BruDupq,BruPon,
BruIbarPonc,BruGraRiemann,BruGrabow,CoKwPon,BruIbarPon}. 
This $\mathbb{Z}_2^n$-graded supergeometry is also relevant to consider the physical implications of 
$\mathbb{Z}_2^n$-graded Lie superalgebras.

We plan the paper as follows. 
The algebra $\Z22$-$osp(1|2)$ considered in this work is presented  in \S \ref{SEC:Alg}. 
It is pointed out that the Cartan subalgebra consists of the elements of different $\Z22$-degree 
and that, in order to investigate representations, one may employ the method similar to the strange Lie superalgebra $Q(n).$ 
We thus need to begin with a classification of irreducible modules of Cartan subalgebra as they are not necessary one dimensional.  
Verma modules over the $\Z22$-$osp(1|2)$ are introduced in \S \ref{SEC:Verma}. 
Then we state our main theorem (Theorem \ref{thm:ILWM}) which is a classification of irreducible lowest weight modules. 
To prove the theorem, we first show  that the Verma modules have properties similar to the case of standard Lie (super)algebras.  
Especially, quotient of a Verma module by its maximal invariant submodule produces an irreducible module 
and the maximal invariant submodule is constructed on singular vectors. 
Thus we make a search of all singular vectors in \S \ref{SEC:Sing} and present an explicit formula of all of them (Theorem \ref{thm:sv}). 
With the knowledge on the singular vectors, we make a list of all irreducible modules in \S \ref{SEC:IrMod}.

Throughout this article, all vector spaces are considered over the ground field $\mathbb{C}. $  

 %
 \section{$\Z22$-$osp(1|2)$ algebra} \label{SEC:Alg}

\subsection{Definitions and triangular decomposition}
 
Here we give the definition of  $\Z22$-$osp(1|2) $ which was introduced first in \cite{rw2}.   
From now on we denote $\Z22$-$osp(1|2)$ simply by $\g $ and follow the convention of \cite{AAD}. 
By definition (see Appendix A for detail), 
as a vector space, $\g$ consists of four subspaces each of which is labelled by an element of $\Z22: $ 
\[
  \g = \g_{(0,0)}\oplus\g_{(0,1)}\oplus\g_{(1,0)}\oplus\g_{(1,1)}. 
\]
Each even subspace ($\g_{(0,0)}$ and $\g_{(1,1)}$) is of three dimension and each odd subspace is of two dimension 
so that $ \dim \g = 10$.  
The basis of $\g$
\begin{equation}
   R, \ L_{\pm} \in \g_{(0,0)}, 
   \quad
   a_{\pm} \in \g_{(0,1)}, \quad \tilde{a}_{\pm} \in \g_{(1,0)},
   \quad \tilde{R},\ \tilde{L}_{\pm} \in \g_{(1,1)}
\end{equation}
is subject to the following non-vanishing relations:
\begin{alignat}{4}
 [R,L_{\pm}] &=\pm 2L_{\pm}, &\quad 
 [R,\tilde{L}_{\pm}] &=\pm 2\tilde{L}_{\pm}, &\quad 
 [R,a_{\pm}]& =\pm a_{\pm}, &\quad 
 [R,\tilde{a}_{\pm}]&=\pm \tilde{a}_{\pm}, 
 \nn\\
 [\tilde{R}, L_{\pm}]&=\pm 2\tilde{L}_{\pm}, &\quad 
 [\tilde{R},\tilde{L}_{\pm}] &=\pm 2L_{\pm} ,&\quad 
 \{ \tilde{R}, a_{\pm} \} &= \tilde{a}_{\pm}, &\quad 
 \{ \tilde{R}, \tilde{a}_{\pm} \}& = a_{\pm},
 \nn\\ 
 [L_{+},L_{-}]&=-R, &\quad 
 [L_{\pm}, \tilde{L}_{\mp}] &=\mp\tilde{R},  &\quad  
 [\tilde{L}_{+}, \tilde{L}_{-}] &=-R, 
 \nn\\
 [ L_{\pm},\tilde{a}_{\mp} ]&=\pm\tilde{a}_{\pm}, &\quad 
 [ L_{\pm}, a_{\mp} ]&=\mp a_{\pm}, &\quad 
 \{ \tilde{L}_{\pm}, a_{\mp} \} &=- \tilde{a}_{\pm}, &\quad 
 \{ \tilde{L}_{\pm}, \tilde{a}_{\mp} \}& = a_{\pm}, 
 \nn\\
 \{ a_{+}, a_{-} \} &=2R, &\quad 
 [ a_{\pm},\tilde{a}_{\mp} ]&=\pm 2\tilde{R}, &\quad 
 \{ \tilde{a}_{-}, \tilde{a}_{+} \}&=2R,
 \nn\\
 [ a_{\pm},\tilde{a}_{\pm} ]&=\mp 4\tilde{L}_{\pm}, &\quad 
 \{ a_{\pm}, a_{\pm} \} &=4L_{\pm}, &\quad 
 \{ \tilde{a}_{\pm}, \tilde{a}_{\pm} \}&=-4L_{\pm}. \label{Z22osp12}
\end{alignat} 

A Cartan subalgebra of $\g$ is defined in a way analogous to Lie superalgebras. 
Namely, it is defined as the maximal nilpotent subalgebra of $\g$ coinciding with its own normalizer. 
Then $\h := \mathrm{lin.\ span}\{  \ R, \ \tilde{R} \ \} $ is a Cartan subalgebra of $\g$ 
and $\h$ is a subspace of $ \g_{(0,0)} \oplus \g_{(1,1)}. $ 
One may see from \eqref{Z22osp12} that $ L_{\pm}, \tilde{L}_{\pm} $ and all the elements of $ \g_{(0,1)} \oplus \g_{(1,0)}$ are not eigenvectors of 
$ \mathrm{ad} \tilde{R}$ which is defined by $ \mathrm{ad} \tilde{R}(X) := \llbracket \tilde{R}, X \rrbracket $ for $ X \in \g$ 
where $ \llbracket \ , \ \rrbracket $ is the general Lie bracket (see Appendix A).  
Therefore, it is not possible to introduce a triangular decomposition of $\g$ according to the adjoint action of $\h$ and 
this is an obstacle to define the lowest weight modules. 

However, this is the situation similar to the case of the strange Lie superalgebra $Q(n)$ where the Cartan subalgebra 
consists of both even and odd elements. 
To overcome the difficulty, the root decomposition of $Q(n)$ is defined with respect to only the even part of its Cartan subalgebra.  
This procedure allows us to define and study the lowest (highest) weight modules of $Q(n)$ 
with a minimal modification from the cases of another simple Lie superalgebra \cite{muss,dict,pen}.

Thus we employ the method of $Q(n)$ to study our algebra $\g.$ 
Namely, we introduce a triangular decomposition of $\g$ with respect to only the degree $(0,0)$ part of $\h,$ 
more precisely, with respect to the eigenvalue of $\mathrm{ad} R.$ 
The $\Z22$-degree and the eigenvalue of $\mathrm{ad} R$ of the elements of $\g$ are summarized below:
\begin{equation}
   \begin{array}{c|cccc}
         & (0,0) & (0,1) & (1,0) & (1,1) \\\hline
      +2 & L_{+} & & & \tilde{L}_{+}
      \\
      +1 & & a_{+} & \tilde{a}_{+}
      \\
      0  & R & & & \tilde{R}
      \\
      -1 & & a_{-} & \tilde{a}_{-}
      \\
      -2 &  L_{-} & & & \tilde{L}_{-}   
   \end{array} 
   \label{adj}
\end{equation}
We now have the triangular decomposition of $\g:$ 
\begin{equation}
  \g = \n^- \oplus \h \oplus \n^+,\quad 
  \n^{\pm} = \mathrm{lin. \ span} \{ \ L_{\pm}, \ a_{\pm},\ \tilde{a}_{\pm},\ \tilde{L}_{\pm}  \ \}.
\end{equation}

We want to define Verma modules over $\g$ as modules induced from 
irreducible modules of the Borel subalgebra $\mathfrak{b} := \h \oplus \n^-.$ 
It is easy to see that an irreducible $\h$-module $\nu$ is endowed with the structure of an irreducible $\mathfrak{b}$-modules 
by setting $\n^- \nu = 0 $ \cite{pen}. 
However, the  price to pay for the employed method is the fact that irreducible $\h$-modules are not necessary one dimensional 
as $\tilde{R}$ is not necessary diagonal in the modules. 
Therefore, we need to begin with a classification of irreducible $\h$-modules.

\subsection{Irreducible modules of $\h$}

 The following theorem is the classification of all  finite dimensional irreducible $\Z22$-graded  $\h$-modules.  
It shows that a finite dimensional irreducible $\h$-module is at most two dimensional. 

%
%
\begin{thm}\label{thm:car}
	Finite dimensional irreducible $\Z22$-graded module of $\h$ is either one dimensional or two dimensional.
	\begin{enumerate}
	 \renewcommand{\labelenumi}{\normalfont (\roman{enumi})}
	  \item one dimensional module $\nu(r) := \mathrm{lin.\ span}\{ \ \ket{0} \ \}$ with  weight $r$ is given by
	  \begin{equation}
	    R\ket{0} = r\ket{0},\quad \tilde{R}\ket{0} = 0,
	  \end{equation}
	  \item two dimensional module $ \nu(r,\lambda) := \mathrm{lin. \ span}\{ \ \ket{0}, \ \ket{1} \ \}$ 
	  with weight $r$ and one additional parameter $\lambda \neq 0$ is given by 
	  \begin{equation}
	      R\ket{0} = r\ket{0},\quad \tilde{R}\ket{0} = \ket{1},\quad \tilde{R}\ket{1} = \lambda\ket{0}
	  \end{equation}
	\end{enumerate}
	where the $\Z22$-degree of $\ket{0}$ and $\lambda$ are $(0,0).$ 
\end{thm}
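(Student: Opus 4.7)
The plan is to exploit two facts immediately visible from~\eqref{Z22osp12}: $\llbracket R,\tilde{R}\rrbracket$ does not appear among the non-vanishing relations, and in the $\Z22$ sign convention recalled in Appendix~A the bracket between a degree $(0,0)$ and a degree $(1,1)$ element is an ordinary commutator, so $\h$ is abelian. The classification is nevertheless richer than for an ordinary abelian Lie algebra, because irreducibility is imposed in the category of $\Z22$-graded modules, where Schur's lemma only forces endomorphisms of degree $(0,0)$ to be scalar.

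First I would record the graded action on a $\Z22$-graded $\h$-module $V=\bigoplus_\alpha V_\alpha$: the element $R$ preserves each $V_\alpha$, while $\tilde{R}$ shifts the grading by $(1,1)$, so $\tilde{R}^2$ also preserves each $V_\alpha$ and commutes with $R$ there. Possibly after an overall grading shift I may assume $V_{(0,0)}\neq 0$, and then finite-dimensionality produces a simultaneous eigenvector $\ket{0}\in V_{(0,0)}$ of the commuting pair $R,\tilde{R}^2$, say $R\ket{0}=r\ket{0}$ and $\tilde{R}^2\ket{0}=\lambda\ket{0}$.

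Next I would show that the span $U:=\mathbb{C}\ket{0}+\mathbb{C}\tilde{R}\ket{0}$ is already an $\h$-submodule: $R$ acts on both vectors as multiplication by $r$, while $\tilde{R}$ sends $\ket{0}\mapsto\tilde{R}\ket{0}$ and $\tilde{R}\ket{0}\mapsto\lambda\ket{0}$. Irreducibility forces $V=U$, hence $\dim V\le 2$. If $\tilde{R}\ket{0}=0$ then $V=\mathbb{C}\ket{0}$ gives the one-dimensional module~(i); otherwise, setting $\ket{1}:=\tilde{R}\ket{0}$, one obtains the two-dimensional candidate~(ii) in exactly the stated form.

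Finally I would dispose of the parameter $\lambda$: if $\lambda=0$ in~(ii) then $\mathbb{C}\ket{1}$ is a proper graded submodule (stable under $R$ and killed by $\tilde{R}$), so that module is reducible; conversely, for $\lambda\neq 0$ a short direct check shows that no non-zero proper graded subspace is stable under both generators, and the displayed formulas visibly give a well-defined graded $\h$-module. The main conceptual point, rather than any single computation, is the setup: one has to keep in mind that graded irreducibility allows $\tilde{R}$ to act non-trivially between two distinct homogeneous subspaces, which is precisely what produces the two-dimensional family that an ordinary abelian Lie algebra does not admit.
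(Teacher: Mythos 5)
Your proposal is correct, but it takes a genuinely different route from the paper's. The paper starts from an $R$-eigenvector $\ket{0}$, builds the cyclic string $\ket{k}=\tilde{R}^{k}\ket{0}$, imposes a termination condition with undetermined coefficients $c_{j}$, treats odd and even string lengths separately, and then produces a nonzero eigenvalue $t$ of $\tilde{R}^{2}$ on the degree-$(0,0)$ part as a root of an explicit polynomial in the $c_{2j}$, finally descending by quotients to the two-dimensional module. You shortcut all of this by observing that $\h$ is abelian, that $R$ and $\tilde{R}^{2}$ are commuting operators preserving $V_{(0,0)}$, and that over $\mathbb{C}$ they therefore have a common eigenvector $\ket{0}$; the span of $\ket{0}$ and $\tilde{R}\ket{0}$ is then a nonzero graded submodule, so irreducibility gives $\dim V\le 2$ immediately, and the dichotomy $\tilde{R}\ket{0}=0$ versus $\lambda\neq 0$ is settled exactly as in the paper. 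Both arguments ultimately rest on the same fact (existence of a degree-$(0,0)$ eigenvector of $\tilde{R}^{2}$), but you obtain it from the standard commuting-operators lemma instead of the paper's explicit root-finding, which makes the proof shorter and less computational; the paper's construction, in exchange, exhibits concretely how a general finite-dimensional $\tilde{R}$-string decomposes, not only the irreducible quotient. Two points you make explicit that the paper leaves implicit are worth keeping: irreducibility is meant in the graded sense (for $\lambda\neq 0$ the ungraded module splits into $\tilde{R}$-eigenlines with eigenvalues $\pm\sqrt{\lambda}$), and modules supported on degrees $(0,1),(1,0)$ are covered only up to an overall degree shift, which is the convention hidden in the phrase ``the $\Z22$-degree of $\ket{0}$ is $(0,0)$.''
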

\begin{proof}
Let $\ket{0}$ be a vector satisfying $R\ket{0} = r\ket{0}.$ 
Then the vector space spanned by the vectors
\begin{equation}
  \ket{0}, \quad \ket{1} := \tilde{R} \ket{0}, \quad \ket{2} := \tilde{R}^2 \ket{0}, \quad \ket{3}:= \tilde{R}^3 \ket{0}, \quad \dots
\end{equation}
is endowed with the structure of an infinite dimensional $\h$-module. 
To have a finite dimensional $\h$-module, this sequence of vectors has to terminate at some positive integer. 
Noting that the $\Z22$-degree of the vector $\ket{k}$ is $(0,0)$ for even $k$ and  $(1,1)$ for odd $k,$ 
one may impose the following condition to get an $n$ dimensional module:
\begin{equation}
   \tilde{R}\ket{n-1} 
   =
   \begin{cases}
      \displaystyle \sum_{j=0}^{{(n-3)}/{2}} c_{2j+1}\ket{2j+1}, & n:odd
      \\[2em]
      \displaystyle \sum_{j=0}^{{(n-2)}/{2}} c_{2j}\ket{2j}, & n:even
   \end{cases}
   \label{coditionh-mod}
\end{equation} 
It is understood that $ \tilde{R} \ket{0} = 0 $ for $n=1.$ 

The case $n=1,$ which is identical to $\nu(r),$ is the trivial one dimensional module so that obviously irreducible. 
If $n > 1$ is an odd integer, then it is immediate to see that the subspace spanned by $\ket{1}, \ket{2}, \cdots, \ket{n-1} $ 
is an invariant submodule of $\h.$ Thus there exist no irreducible modules for odd $n > 1.$ 

Next we suppose that $n$ is even. For any non-vanishing vector $ \ket{w_{00}} $ of degree $(0,0)$, 
$ \ket{w_{11}} := \tilde{R} \ket{w_{00}} $ is also non-vanishing but has degree $(1,1)$. 
If the relation 
\begin{equation}
   \tilde{R}\ket{w_{11}} = t \ket{w_{00}} \label{tRw11}
\end{equation}
holds for some constant $t$, 
then there exists an invariant submodule. 
It is easy to see that the invariant submodule for $t=0$ is one dimensional spanned by $ \ket{w_{11}}$ alone and for $ t \neq 0$ is 
two dimensional spanned by $ \ket{w_{00}} $ and $ \ket{w_{11}}.$ 
Therefore, for $t=0$ the quotient module by the one dimensional invariant submodule is odd dimensional so that it is reducible unless the quotient module is isomorphic to $\nu(r).$ 
On the other hand, for a non-vanishing $t$ the quotient modules is also even dimensional so that one may repeat the same argument 
and at the end we reach the two dimensional module $ \nu(r,\lambda)$ whose irreducibility is obvious.

We thus need to show that the relation \eqref{tRw11} holds true for $ n > 2, $ i.e., existence of such non-vanishing $t.$ 
The vectors $ \ket{w_{00}} $ may be written as a linear combination of $ \ket{k}$:
\begin{equation}
  \ket{w_{00}} = \sum_{j=0}^{(n-2)/2} \lambda_{2j} \ket{2j}.
\end{equation}
Then we have
\begin{equation}
  \ket{w_{11}} = \sum_{j=0}^{(n-2)/2} \lambda_{2j} \ket{2j+1}. 
\end{equation}
The condition \eqref{tRw11} together with \eqref{coditionh-mod} gives the following relations of the expansion coefficients: 
\begin{align}
 \lambda_{n-2} \, c_{0} &= t \, \lambda_{0},
 \\
 \lambda_{2j-2} + \lambda_{n-2}\, c_{2j} &= t\, \lambda_{2j}, \quad  j=1, 2, \cdots, \dfrac{n-2}{2} 
\end{align}
These relations allows us to express $ \lambda_{2j} $ in terms of $ \lambda_{n-2}:$
\begin{align}
    \lambda_{0} &= t^{-1} c_0\, \lambda_{n-2}, \nn \\
	\lambda_{2} &= t^{-2} (c_{0} + c_{2}t) \lambda_{n-2}, \nn \\
	\lambda_{4} &= t^{-3} (c_{0} + c_{2}t + c_{4}t^2) \lambda_{n-2}, \nn \\
	& \vdots \nn\\
	\lambda_{n-4} &= t^{-(n-2)/2} (c_{0} + c_{2}t + \cdots + c_{n-6}t^{(n-6)/2} + c_{n-4}t^{(n-4)/2}) \lambda_{n-2}, 
	\nn \\
	 \lambda_{n-4} &= (-c_{n-2} + t) \lambda_{n-2}. \label{lm4}
\end{align}
If $ \lambda_{n-2} = 0, $ then all $ \lambda_{2j}$ vanishes so that $ \ket{w_{00}}= 0.$ 
So we are able to assume that $ \lambda_{n-2} \neq 0.$ 
Then elimination of $\lambda_{n-4}$ from the last two relations in \eqref{lm4} gives an algebraic equation for $t:$
\begin{equation}
  t^{n/2} - c_{n-2}t^{(n-2)/2}- c_{n-4}t^{(n-4)/2}  -  c_{n-6}t^{(n-6)/2} - \cdots - c_{2} t - c_{0}= 0.
\end{equation}
This equation has at least one non-zero solution unless all $c_{2j}$ vanishes. 
Vanishing $ c_{2j}$ implies, due to the condition \eqref{coditionh-mod}, 
that $ \ket{n-1}$ spans the one dimensional invariant submodule so that the problem is reduced 
to the case of odd $n.$ 
This establishes  the existence of non-vanishing $t$ and completes the proof. 
\end{proof}

\section{Verma modules over $\g$ and their reducibility} \label{SEC:Verma}
\setcounter{equation}{0}

The irreducible $\h$-modules $ \nu(r)$ and $\nu(r,\lambda)$ (Theorem \ref{thm:car}) are endowed with the structure of 
an irreducible $\mathfrak{b}$-module by setting $ \n^- \ket{0} = 0, $ 
since $ \llbracket \n^-, \tilde{R} \rrbracket \subseteq \n^- $ as is seen from \eqref{Z22osp12}. 
Thus one may define Verma modules over $\g$ in a usual way, 
i.e., a module induced from the irreducible $\mathfrak{b}$-module. 
We have two types of Verma modules for $\g:$
\begin{align}
  M(r) &:= U(\g)\otimes_{U(\mathfrak{b})}\nu(r)  = U(\n^+) \otimes \nu(r),
  \\
  M(r,\lambda) &:= U(\g)\otimes_{U(\mathfrak{b})}\nu(r,\lambda) = U(\n^+) \otimes \nu(r,\lambda)
\end{align}
where $ U(\g) $ denotes the enveloping algebra of $\g.$

Our aim is the detailed study of reducibility of these Verma modules. 
Here we present the main result.
%
\begin{thm}\label{thm:ILWM}
		All irreducible $\Z22$-graded lowest weight modules over $\g$ are listed as follows:
		\begin{enumerate}
		 	 \renewcommand{\labelenumi}{\normalfont (\roman{enumi})}
		 \item the Verma module $ M(r)$ for $r+2M \neq 0, \qquad \dim M(r) = \infty $
		 \item the quotient module $ M(r)/W $ for $r + 2M =0, \quad \dim (M(r)/W) = (2M+1)^2$
		 \item the Verma module $ M(r,\lambda) $ for $ (r+2M)^2 \neq \lambda,\quad \dim M(r,\lambda) = \infty$
		 \item the quotient module $ M(r,\lambda)/W $ for $ (r+2M)^2 = \lambda, \quad \dim(M(r,\lambda)/W) = \infty $
		\end{enumerate}
		where $M $ is a non-negative integer and $W := U(\n^+) \omega $ is the maximal invariant submodule in the Verma module. 
$ \omega := lin. span\{  \ \ket{\chi_{01}}, \ \ket{\chi_{10}}\ \}$ is the two dimensional subspace of $ M(r) $ or $M(r,\lambda)$  spanned by the vectors given in Theorem \ref{thm:sv}. 
\end{thm}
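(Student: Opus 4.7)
The plan is to deduce Theorem \ref{thm:ILWM} by combining an $R$-eigenvalue weight decomposition of each Verma module with the explicit classification of singular vectors to be furnished by Theorem \ref{thm:sv}. The argument is the $\Z22$-graded analogue of the standard Verma-module reducibility story familiar from Lie and Lie superalgebras: reducibility of $M$ is controlled by the existence of singular vectors, and the irreducible quotient is recovered by dividing out the submodule they generate.

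First, I would set up the $R$-weight decomposition. The table \eqref{adj} shows that every generator of $\n^+$ has a strictly positive integer $R$-weight, so by a PBW-type basis of $U(\n^+)$ each Verma module is a direct sum of finite-dimensional $R$-eigenspaces, with the lowest eigenvalue $r$ occupied precisely by the image of $\nu(r)$ or $\nu(r,\lambda)$. Since $R\in\h$, every $\g$-submodule respects this decomposition, and a submodule meeting the lowest weight space non-trivially equals the whole module. Consequently the sum $W$ of all proper $\g$-submodules is itself proper, and it is therefore the unique maximal proper submodule; the quotient $M(r)/W$ (respectively $M(r,\lambda)/W$) is irreducible whenever it is non-zero.

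Next, I would identify $W$ with $U(\n^+)\omega'$, where $\omega'$ denotes the space of singular vectors, i.e.\ $\Z22$-homogeneous elements of $M$ of strictly positive $R$-weight annihilated by $\n^-$. Given any non-zero $v\in W$ of minimal $R$-weight in its weight space, $\n^- v\subseteq W$ must vanish by minimality because elements of $\n^-$ strictly lower $R$-weight; hence $v$ is singular. Conversely every singular vector generates a proper submodule, so $W$ is precisely the $U(\n^+)$-span of $\omega'$. Classifying $W$ thus reduces to identifying $\omega'$, which is exactly what Theorem \ref{thm:sv} accomplishes: $\omega'$ is non-zero only under the algebraic conditions $r+2M=0$ (for $M(r)$) or $(r+2M)^2=\lambda$ (for $M(r,\lambda)$) with $M\in\mathbb{Z}_{\geq 0}$, and when it is non-zero it coincides with the two-dimensional space $\omega=\mathrm{lin.\ span}\{\ket{\chi_{01}},\ket{\chi_{10}}\}$.

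Substituting these conditions produces the four cases: in (i) and (iii) $\omega'=0$ so $W=0$ and the Verma module is itself irreducible; in (ii) and (iv) the irreducible lowest weight module is $M/W$. What remains is the dimension count. I would carry it out by writing the $R$-weight generating function of $M(r)$ from the PBW basis of $U(\n^+)$, doing the same for the submodule $W$ generated by $\ket{\chi_{01}}$ and $\ket{\chi_{10}}$, and checking that the difference sums to $(2M+1)^2$ in case (ii) and remains infinite in case (iv). The principal obstacle, apart from the construction of the singular vectors themselves in Theorem \ref{thm:sv}, is precisely this character subtraction: the two Verma-like subspaces $U(\n^+)\ket{\chi_{01}}$ and $U(\n^+)\ket{\chi_{10}}$ overlap non-trivially because $\ket{\chi_{01}}$ and $\ket{\chi_{10}}$ are related by the Cartan element $\tilde R$, so the count requires a careful tracking of this intersection weight by weight — something that must be read off from the explicit form of the singular vectors.
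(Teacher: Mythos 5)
The first half of your outline (finite-dimensional $R$-weight spaces, uniqueness of the maximal submodule $W$, irreducibility of the quotient, and the observation that lowest-level vectors of $W$ are annihilated by $\n^-$) matches the paper's Proposition \ref{prop:mis} and correctly settles cases (i) and (iii). The genuine gap is in how you pass to cases (ii) and (iv): you assume that, under the critical conditions, the space of singular vectors ``coincides with the two-dimensional space $\omega$''. That is false for $M(r)$: when $r+2M=0$, Theorem \ref{thm:sv} produces, besides $\ket{\chi_{01}},\ket{\chi_{10}}$ at level $2M+1$, an additional degree-$(1,1)$ singular vector $\ket{\chi_{11}}$ at level $2(2M+1)$, given in \eqref{111}. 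Consequently your $U(\n^+)\omega'$ is a priori strictly larger than $U(\n^+)\omega$, and the identification $W=U(\n^+)\omega$ --- on which the count $\dim(M(r)/W)=(2M+1)^2$ rests --- requires proving that $\ket{\chi_{11}}$ already lies in $U(\n^+)\omega$. This is precisely the step the paper flags as ``highly non-trivial'' and proves as Proposition \ref{prop:Mrsv}, by expanding $\ket{\chi_{11}}$ over the degree-$(1,1)$ vectors of $(U(\n^+)\omega)_{2(2M+1)}$ and showing the resulting triangular linear system for the coefficients is non-singular. Your proposal never confronts this vector, so as written it does not justify case (ii); a related but milder issue is that your inference ``so $W$ is precisely the $U(\n^+)$-span of $\omega'$'' only establishes $U(\n^+)\omega'\subseteq W$ together with the fact that the bottom of $W$ is singular --- the reverse inclusion is asserted rather than derived.

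The second substantive half of the paper's proof is the dimension count you defer to a ``character subtraction with careful tracking of the intersection''. The paper does this concretely (Proposition \ref{prop:2}) by exhibiting a basis of each weight space of $W$, giving $\dim W_{2M+1+q}=2(q+1)$, and then subtracting from $\dim M_N(r)=N+1$, $\dim M_N(r,\lambda)=2(N+1)$ to get $(2M+1)^2$ and $\infty$ respectively. Note that the mechanism is not the one you anticipate: $\tilde R$ is not an element of $U(\n^+)$, so the $\tilde R$-relations \eqref{rel0110}, \eqref{relevenweightspace} do not create an overlap between $U(\n^+)\ket{\chi_{01}}$ and $U(\n^+)\ket{\chi_{10}}$ (their role is only to make $U(\n^+)\omega$ a $\g$-submodule); in the paper's basis the two cyclic pieces contribute independently, $q+1$ vectors each, and the reason $W$ is thin is the collapse of $U(\n^+)$ itself through the quadratic relations $\tilde a_+^2=-a_+^2=-2L_+$ and reordering identities such as $a_+(\tilde a_+a_+)^j=(a_+\tilde a_+)^ja_+$. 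So both the identification of $W$ (via $\ket{\chi_{11}}$) and the weight-space dimensions of $W$ are left unproved in your proposal, and these are exactly the parts of the theorem that go beyond the standard Verma-module formalities.
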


The rest of this paper is dedicated to the proof of Theorem \ref{thm:ILWM}. 
A crucial point of the proof is the observation that the procedure same as simple Lie (super)algebras is also applicable to 
this case. Namely, the fact that taking a quotient by the  maximal invariant submodule produces an irreducible module 
and the maximal invariant submodule is generated by singular vectors. 
To see this we begin with basic properties of the Verma modules. 

For simplicity, we indicate $M(r)$ and $M(r,\lambda)$ by $\check{M}. $ 
The following two are immediate consequences of the defining relations \eqref{Z22osp12} (see also \eqref{adj}).
\begin{enumerate}
	\item  $\check{M}$ has weight space decomposition
	\begin{align}
	\check{M} = \bigoplus_{k\in\bZ_{\geq 0}} \check{M}_{k}, \quad \check{M}_{k} = \{ \ket{v} \in \check{M} \ |\ R\ket{v} = (r+k) \ket{v} \}
	\end{align}
	We refer to $k$  as \textit{level} of the weight space $\check{M}_{k}$.
	\item Each weight space is a direct sum of subspaces of degree either (0,0) and (1,1) or (0,1) and (1,0):
	\begin{align}
	\check{M}_{k} = \check{M}_{k,(0,0)} \oplus \check{M}_{k,(1,1)} \quad \text{or} \quad  
	\check{M}_{k,(0,1)} \oplus \check{M}_{k,(1,0)} \label{WSdegree}
	\end{align}
\end{enumerate}

The following is the key  proposition of the classification of the lowest weight irreducible modules.
%
%
\begin{prop}\label{prop:mis}
\leavevmode \par
  \begin{enumerate}
     \renewcommand{\labelenumi}{\normalfont (\roman{enumi})}  
     \item $\check{M}$ has a unique maximal submodule $W$.
     \item  $L := \check{M}/W$ is irreducible.
     \item For a positive integer $p$, let $\displaystyle W = W_{p} \oplus W_{p+1} \oplus \cdots$ be the weight space decomposition, then
       \begin{equation}
         a_{-}\ket{\chi} = \tilde{a}_{-} \ket{\chi} = 0, \quad \forall \ket{\chi} \in W_{p} \label{sv}
       \end{equation}
  \end{enumerate}
\end{prop}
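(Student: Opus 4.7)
The plan is to adapt the standard lowest-weight machinery for simple Lie (super)algebras to the $\Z22$-graded setting, the only novelty being that the lowest weight space $\check{M}_0 = \nu$ may be two-dimensional (namely $\nu(r,\lambda)$) rather than a line. The pivotal step is the lemma: \emph{any nonzero graded submodule $N \subseteq \check{M}$ with $N \cap \check{M}_0 \neq 0$ satisfies $N = \check{M}$.} Indeed, by \eqref{adj} both $R$ and $\tilde{R}$ have $\mathrm{ad}\,R$-eigenvalue zero, so $\h$ preserves each weight space, while $\n^-$ strictly lowers the level and hence annihilates $\check{M}_0$. Thus $N_0 := N \cap \check{M}_0$ is an $\h$-submodule of $\nu$, and since $\nu$ is irreducible by Theorem \ref{thm:car} we get $N_0 = \nu$. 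Using $\check{M} = U(\n^+) \otimes \nu$, applying $U(\n^+)$ to $N_0$ yields $\check{M} \subseteq N$.

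Given the lemma, part (i) follows at once. Let $W$ be the sum of all proper graded submodules of $\check{M}$. By contrapositive of the lemma, each summand sits inside $\bigoplus_{k \geq 1} \check{M}_k$, so the same holds for $W$; hence $W$ is itself proper, and is the unique maximal submodule by construction. Part (ii) is the standard quotient argument: a nonzero proper submodule of $L = \check{M}/W$ would pull back to a submodule of $\check{M}$ strictly between $W$ and $\check{M}$, contradicting maximality of $W$.

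For part (iii), assuming $W \neq 0$ (otherwise the claim is vacuous), let $p > 0$ be the smallest level with $W_p \neq 0$; such a $p$ exists because $W$ is nonzero and lies in $\bigoplus_{k \geq 1}\check{M}_k$. From \eqref{adj}, $a_-$ and $\tilde{a}_-$ have $\mathrm{ad}\,R$-eigenvalue $-1$, so for any $\ket{\chi} \in W_p$ the vectors $a_-\ket{\chi}$ and $\tilde{a}_-\ket{\chi}$ lie in $W_{p-1}$, which vanishes by minimality of $p$. This gives \eqref{sv}.

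The only nontrivial point is the opening lemma, specifically the step $N_0 = \nu$: it rests on Theorem \ref{thm:car}, where the non-obvious structure of $\h$ (containing the degree-$(1,1)$ element $\tilde{R}$) is already handled. Once that input is in hand, the remainder is routine weight-space bookkeeping and incurs no additional difficulty from the $\Z22$-grading.
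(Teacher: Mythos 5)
Your proposal is correct and follows essentially the same route as the paper: show that a submodule meeting the lowest weight space must be all of $\check{M}$ (the paper does this by hand via the $\tilde{R}$-action on $\nu(r,\lambda)$, you invoke the irreducibility from Theorem \ref{thm:car}, which is the same content), conclude that the sum of proper submodules lies in $\bigoplus_{k\geq 1}\check{M}_k$ and is the unique maximal one, take the standard quotient argument for (ii), and for (iii) use that $a_-,\tilde{a}_-$ lower the level by one so they send $W_p$ into $W_{p-1}=0$. No gaps.
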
 
We remark that $ \tilde{L}_{-}\ket{\chi}=0$ follows immediately from \eqref{sv}. 
We refer to the vector $\ket{\chi}$ satisfying the relation \eqref{sv} as a \textit{singular vector}. 
Although one may prove Proposition \ref{prop:mis} in the same way as the case of simple Lie superalgebras 
(see, for instance \cite{muss}), 
we present the proof below as the proposition is crucial for the study of reducibility.
\begin{proof}
\noindent
(i)  Let $W$ be a submodule of $\check{M}$, then $W$ is a direct sum of the weight space $ W = \bigoplus_{k\in\bZ_{\geq 0}} W_{k}$. 
If $W_{0} \neq 0$, then $W_{0} = \nu(r)$ for $M(r)$ and $W_{0} = \nu(r,\lambda)$ for $M(r,\lambda)$. 
This is obvious for $M(r)$ as $\dim W_{0} = \dim \nu(r) = 1.$ 
For $ M(r,\lambda),$ it is immediate from $\tilde{R}W_{0} \subseteq W_{0}.$ 
Namely, if $\ket{0} \in W_{0}$, then $\ket{1} = \tilde{R}\ket{0}$ must be an element of $ W_{0}$ and vice versa. 
It follows from $W_{0} = \nu(r)$ or $ \nu(r,\lambda)$ that $W = \check{M}$ 
because $\check{M}$ is induced from $\nu(r)$ or $ \nu(r,\lambda). $
Hence if $W$ is a proper submodule, 
then $ W \subseteq \check{M}^+ := \bigoplus_{k=1} \check{M}_{k}$. 
Therefore, the sum of all proper submodule is contained in $\check{M}^+$, so this sum is the unique maximal submodule. 

\medskip
\noindent
(ii) Let $W$ be the maximal submodule of $\check{M}$ and $L=\check{M}/W$. 
Suppose that there exists a submodule $N \subset L$. 
Then for each element $\ket{u}\in N$, there exists an element $\ket{v}\in \check{M}$ 
such that $\ket{v}=\ket{u}+\ket{w}$ where $\ket{w}\in W$. 
Note that , by definition, $\ket{u}\notin W$. 
For all $X\in \g$, $X\ket{v}=X\ket{u}+X\ket{w}$. 
This means the existence of a submodule of $\check{M}$ larger than $W$ as $X\ket{u}\in N, X\ket{w}\in W$. 
This is a contradiction so that $N=0$. 

\medskip
\noindent 
(iii) If there exists $\ket{\chi}\in W_{p}$ such that $\tilde{a}_{-}\ket{\chi}\neq 0$ or $a_{-}\ket{\chi} \neq 0$, 
then  $W$ is not maximal  as $\tilde{a}_{-}\ket{\chi}, a_{-}\ket{\chi} \in \check{M}_{p-1}$.
\end{proof}

Due to Proposition \ref{prop:mis}, the classification problem is reduced to the search of all singular vectors 
of $M(r) $ and $M(r,\lambda).$  
In the next section, we perform the search and derive the explicit formula of all singular vectors.

\section{Singular vectors of the Verma modules} \label{SEC:Sing}
\setcounter{equation}{0}

The aim of this section is explicit construction of all singular vectors of $M(r)$ and $M(r,\lambda).$ 
To this end, we take the following basis of $M(r)$
\begin{equation}
  \ket{\alpha, k, m} := \tilde{a}_{+}^\alpha\, a_{+}^k\, \tilde{L}_{+}^m \ket{0} \label{BasisMr}
\end{equation}
and of $ M(r,\lambda)$
\begin{equation}
	\ket{\alpha, k, m; \beta} := \tilde{a}^\alpha\, a_{+}^k\, \tilde{L}_{+}^m \ket{\beta} \label{BasisMrl}
\end{equation}
where $ \alpha, \beta $ take a value of $0$ or $1$ and $k, m \in \bZ_{\geq 0}.$ 
Note that there is no need of higher powers of $\tilde{a}_+ $ and $L_+$ because of the relation  
$\tilde{a}_{+}^2 = -a_{+}^2 = -2 L_+$.
The action of $\g $ on these basis, which will be used in the search of singular vectors, 
is obtained by straightforward computation.  
{
\setlength{\leftmargini}{13pt} 
\begin{enumerate}
  \item action of $\h$ on $M(r):$
  \begin{align}
    R\ket{\alpha,k,m} &= (r+\alpha+k+2m)\ket{\alpha,k,m}, \label{RonMr}\\
    \tilde{R} \ket{0,k,m} &= (-1)^k m\ket{0,k+2,m-1} + (-1)^k 2(k-\bar{k})\ket{0,k-2,m+1} + \bar{k}\ket{1,k-1,m},\\
    \tilde{R} \ket{1,k,m} &= (-1)^{k+1} m\ket{1,k+2,m-1} + (-1)^{k+1} 2(k-\bar{k})\ket{1,k-2,m+1} \nn\\
        &+ (1+\bar{k})\ket{0,k+1,m}
  \end{align}
  where $ \bar{k} = k \mod 2.$
  \item action of $\n^+$ on $M(r):$
  \begin{align}
    \tilde{a}_{+}\ket{\alpha,k,m} &= (-1)^\alpha\ket{\overline{\alpha+1},k+2\alpha,m},\\
    a_{+}\ket{\alpha,k,m}&=\ket{\alpha,k+1,m} + (-1)^{k+1}  4\alpha \ket{0,k,m+1},\\
    \tilde{L}_{+}\ket{\alpha,k,m}&=(-1)^{\alpha+k}\ket{\alpha,k,m+1}.
  \end{align}
  \item  action of $\n^-$ on $M(r):$
  \begin{align}
    \tilde{a}_{-}\ket{0,k,m} &= -(k-\bar{k}) \ket{1,k-2,m} + (-1)^k  m\ket{0,k+1,m-1},\\
    \tilde{a}_{-}\ket{1,k,m} &= (2r+k+\bar{k}+4m )\ket{0,k,m} + (-1)^{k+1} m\ket{1,k+1,m-1},\\
    a_{-}\ket{0,k,m} &= (k+(2r-1)\bar{k})\ket{0,k-1,m} + (-1)^{k+1}  m\ket{1,k,m-1},\\
    a_{-}\ket{1,k,m} &= (k+(2r-3)\bar{k})\ket{1,k-1,m} +(-)^{k+1} m \ket{0,k+2,m-1} \nn\\
       &+ (-1)^{k+1}  4(k-\bar{k})\ket{0,k-2,m+1}, \\
    \tilde{L}_{-}\ket{0,k,m} &= (-1)^{k}  m(r+k+m-1)\ket{0,k,m-1} \nn \\
      &+ (-1)^k  (k-\bar{k})(k-\bar{k}-2)\ket{0,k-4,m+1} 
       + \bar{k}(k-1)\ket{1,k-3,m}, \\
   \tilde{L}_{-}\ket{1,k,m} &= (-1)^{k+1}  m(r+k+m)\ket{1,k,m-1} \nn \\
      & + (-1)^{k+1}(k-\bar{k})(k-\bar{k}-2)\ket{1,k-4,m+1} \nn\\
      & + (2(r-1) \bar{k} + (\bar{k}+1)k) \ket{0,k-1,m}.
   \end{align}
   \item action of $\h$ on $M(r,\lambda):$
   \begin{align}
      R\ket{\alpha,k,m;\beta} &= (r+\alpha+k+2m)\ket{\alpha,k,m;\beta}, \label{RonMrl}\\
      \tilde{R} \ket{0,k,m;\beta} &= (-1)^k m\ket{0,k+2,m-1;\beta} + (-1)^k 2(k-\bar{k})\ket{0,k-2,m+1;\beta} \nn\\
            & + \bar{k}\ket{1,k-1,m;\beta} + (-1)^k  \lambda^\beta\ket{0,k,m; \overline{\beta+1}}, \\
      \tilde{R} \ket{1,k,m;\beta} &= (-1)^{k+1} m\ket{1,k+2,m-1;\beta} + (-1)^{k+1}  2(k-\bar{k})\ket{1,k-2,m+1;\beta}\nn\\
            & + (1+\bar{k})\ket{0,k+1,m;\beta} + (-1)^{k+1}  \lambda^{\beta} \ket{1,k,m;\overline{\beta+1}}.
   \end{align}
   \item action of $\n^+$ on $M(r,\lambda):$
   \begin{align}
       \tilde{a}_{+}\ket{\alpha,k,m;\beta} &= (-1)^\alpha\ket{\overline{\alpha+1},k+ 2\alpha,m;\beta},\\
       a_{+}\ket{\alpha,k,m;\beta}&=\ket{\alpha,k+1,m} + (-1)^{k+1}  4\alpha\ket{0,k,m+1;\beta},\\\medskip
       \tilde{L}_{+}\ket{\alpha,k,m;\beta}&=(-1)^{\alpha+k}\ket{\alpha,k,m+1;\beta}.
   \end{align}
   \item action of $\n^-$ on $M(r,\lambda):$
   \begin{align}
       \tilde{a}_{-}\ket{0,k,m;\beta} &= -(k-\bar{k}) \ket{1,k-2,m;\beta} + (-1)^k  m\ket{0,k+1,m-1;\beta} \nn\\
           & - 2\bar{k}\lambda^\beta\ket{0,k-1,m;\overline{\beta+1}},\\
       \tilde{a}_{-}\ket{1,k,m;\beta} &= (2r+k+\bar{k}+4m )\ket{0,k,m;\beta} + (-1)^{k+1} m\ket{1,k+1,m-1;\beta} \nn\\
           & + 2\bar{k}\lambda^\beta\ket{1,k-1,m;\overline{\beta+1}}\\
      a_{-}\ket{0,k,m;\beta} &= (k+(2r-1)\bar{k})\ket{0,k-1,m;\beta} + (-1)^{k+1}  m\ket{1,k,m-1;\beta}, \\
      a_{-}\ket{1,k,m;\beta} &= (k+(2r-3)\bar{k})\ket{1,k-1,m;\beta} +(-1)^{k+1}m \ket{0,k+2,m-1;\beta} \nn\\
           &+ (-1)^{k+1}  4(k-\bar{k})\ket{0,k-2,m+1;\beta} + (-1)^{k+1}\lambda^\beta\ket{0,k,m;\overline{\beta+1}},\\
      \tilde{L}_{-}\ket{0,k,m;\beta} &= (-1)^{k}  m(r+k+m-1)\ket{0,k,m-1;\beta} \nn \\
           &+ (-1)^k  (k-\bar{k})(k-\bar{k}-2)\ket{0,k-4,m+1;\beta} \nn\\
           & + \bar{k}(k-1)\ket{1,k-3,m;\beta} + (-1)^k  (k-\bar{k})\lambda^\beta\ket{0,k-2,m;\overline{\beta+1}},\\
      \tilde{L}_{-}\ket{1,k,m;\beta} &= (-1)^{k+1}  m(r+k+m)\ket{1,k,m-1;\beta} \nn \\
           & + (-1)^{k+1}(k-\bar{k})(k-\bar{k}-2)\ket{1,k-4,m+1;\beta} \nn\\
           &+ (2(r-1) \bar{k} + (\bar{k}+1)k)\ket{0,k-1,m;\beta} \nn \\
           &+ (-1)^{k+1}  (k-\bar{k})\lambda^\beta\ket{1,k-2,m;\overline{\beta+1}}.
\end{align}
\end{enumerate}
}

%
%
\begin{thm}\label{thm:sv}
	The following is a complete list of singular vectors of the Verma modules $M(r)$ and $M(r,\lambda)$ over $\g$:
	\begin{enumerate}
	  \renewcommand{\labelenumi}{\normalfont (\roman{enumi})}
	   \item $M(r):$ For $r+2M=0$, there exist singular vectors of $\Z22$-degree  (0,1) and (1,0) at level $2M+1$ which are given by
		\begin{align}
		\ket{\chi_{01}} &= \sum_{j=0}^{\lfloor M/2 \rfloor} 2^{2j} \dbinom{M}{2j} \ket{0, 2(M-2j)+1, 2j} \nn\\
		&- \sum_{j=0}^{\lfloor (M-1)/2 \rfloor} 2^{2j+1} \dbinom{M}{2j+1} \ket{1, 2(M-2j-1), 2j+1}, \label{101}\\
		\ket{\chi_{10}} &= -\sum_{j=0}^{\lfloor (M-1)/2 \rfloor} 2^{2j+1} \dbinom{M}{2j+1} \ket{0, 2(M-2j-1)+1, 2j+1} \nn \\
		&+ \sum_{j=0}^{\lfloor M/2 \rfloor} 2^{2j} \dbinom{M}{2j} \ket{1, 2(M-2j), 2j} \label{110}
		\end{align} 
		and a degree (1,1) singular vector at level $2(2M+1)$ which is given by
       \begin{equation}
       	\ket{\chi_{11}} = \sum_{j=0}^{M} (-4)^j \dbinom{M}{j} ( -2 \ket{0,4(M-j),2j+1} + \ket{1,4(M-j)+1,2j} ) \label{111}
       \end{equation} 
       where $M \in \bZ_{\geq 0}.$ These singular vectors satisfy the relations:
       \begin{equation}
          \tilde{R}\ket{\chi_{01}} = (2M+1)\ket{\chi_{10}},\quad \tilde{R}\ket{\chi_{10}} = (2M+1)\ket{\chi_{01}},
           \quad
           \tilde{R}\ket{\chi_{11}} = 0. \label{rel0110}
       \end{equation}           
      \item $ M(r,\lambda):$ For $(r+2M)^2 = \lambda$ there exist degree (0,1) and (1,0) singular vectors 
      at level $2M+1\ (M \in \bZ_{\geq 0})$ which are given by
		\begin{align}
		\ket{\chi_{01}} &= \sum_{j=0}^{M} (-2)^j \dbinom{M}{j} 
		\big( (r+2M)^{\overline{j+1}}  \ket{0,2(M-j)+1,j;\overline{j}} 
		\nn \\
		& \hspace{5cm}+ (r+2M)^{\ol{j}}\ket{1,2(M-j),j;\overline{j+1}} \big), \label{201}\\
		\ket{\chi_{10}} &= \sum_{j=0}^{M} (-2)^j \dbinom{M}{j} 
		\big( (r+2M)^{\overline{j}}  \ket{0,2(M-j)+1,j;\overline{j+1}} 
		\nn \\
		& \hspace{5cm}+ (r+2M)^{\ol{j+1}} \ket{1,2(M-j),j;\overline{j}} \big)  \label{210}
		\end{align}
		where $\overline{j} = j \pmod 2$.  
		These singular vectors satisfy the relations
		\begin{equation}
         \tilde{R}\ket{\chi_{01}} = (-r+1)\ket{\chi_{10}},\quad \tilde{R}\ket{\chi_{10}} = (-r+1)\ket{\chi_{01}}. \label{relevenweightspace}		
       \end{equation}				
	\end{enumerate}
\end{thm}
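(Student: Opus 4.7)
My plan is to carry out a direct calculation in the explicit bases \eqref{BasisMr} and \eqref{BasisMrl}. Fix a level $N$ and a $\Z22$-degree $d$. By the weight relations \eqref{RonMr} and \eqref{RonMrl} combined with the parity constraints imposed on the exponents $\alpha,k,m$ by the $\Z22$-grading, the subspace of vectors of level $N$ and degree $d$ is finite-dimensional with an explicit spanning set. I would write a general candidate singular vector $\ket{\chi}$ as a linear combination of these spanning vectors with undetermined coefficients, and then impose $a_-\ket{\chi}=0$ and $\tilde a_-\ket{\chi}=0$ from Proposition \ref{prop:mis}(iii); the remaining condition $\tilde L_-\ket{\chi}=0$ is automatic, since $[a_-,\tilde a_-]=4\tilde L_-$ by \eqref{Z22osp12}.

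Each of the two defining equations expands, via the explicit actions of $a_-$ and $\tilde a_-$ tabulated above, into a linear combination of basis vectors at level $N-1$; collecting the coefficient of every such basis vector produces a homogeneous linear system in the unknown coefficients of $\ket{\chi}$. For degree $(0,1)$ or $(1,0)$ at level $2M+1$, the combined system reduces, after elimination, to a two-term recursion between consecutive coefficients in a natural $j$-indexing of the basis vectors; the proportionality constants are read off from the boundary factors $k+(2r-1)\bar k$ and $2r+k+\bar k+4m$ in the actions of $a_-$ and $\tilde a_-$. Iterating this recursion telescopes into a product that collapses to a binomial coefficient times a power of $2$, reproducing \eqref{101} and \eqref{110}. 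The consistency condition at the extremal basis vector, where the $a_-$- and $\tilde a_-$-systems both close up, is precisely where the weight condition enters: it forces $r+2M=0$ for $M(r)$, and, after matching the mixing between the $\beta=0$ and $\beta=1$ sectors induced by the $\lambda$-terms in the $\tilde a_-$-action, $(r+2M)^2=\lambda$ for $M(r,\lambda)$.

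For the degree-$(1,1)$ singular vector $\ket{\chi_{11}}$ of $M(r)$, I would apply the same procedure directly at level $2(2M+1)$; the spanning set is again finite and explicit, and the resulting two-term recursion telescopes into the coefficients $(-4)^j\binom{M}{j}$ of \eqref{111}. Completeness --- that no singular vectors exist at any other level or for any other weight --- follows from exactly the same recursion analysis: at generic $r$ (respectively, whenever $(r+2M)^2\neq\lambda$) the recurrence is non-degenerate and its only solution has every coefficient equal to zero. The $\tilde R$-action relations \eqref{rel0110} and \eqref{relevenweightspace} are then verified by substituting the closed forms into the explicit $\tilde R$-action on the basis and reindexing the binomial sums, typically by $j\mapsto M-j$ or by splitting the sum according to the parity of $j$.

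The main obstacle throughout is combinatorial rather than conceptual: the algebra has four pairs of ladder generators $a_\pm$, $\tilde a_\pm$, $L_\pm$, $\tilde L_\pm$ whose action formulas each carry a $(-1)^k$-sign and a boundary contribution depending on $\bar k$, so the four parity cases of $(\alpha,k)$ must be tracked separately through the recursion. Recognising the resulting telescoping product as a single binomial coefficient, rather than an unwieldy product of ratios, is the delicate point, particularly in the $M(r,\lambda)$ case where the alternating factors $(r+2M)^{\bar j}$ and $(r+2M)^{\overline{j+1}}$ only make the binomial pattern emerge after splitting the sum by the parity of $j$.
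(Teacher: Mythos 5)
Your proposal follows essentially the same route as the paper's own proof: expand a candidate singular vector in the explicit bases \eqref{BasisMr}, \eqref{BasisMrl} within a fixed level and $\Z22$-degree, impose $a_-\ket{\chi}=\tilde{a}_-\ket{\chi}=0$ via the tabulated actions (with $\tilde{L}_-\ket{\chi}=0$ automatic), reduce the linear system to a recursion that telescopes into the binomial-coefficient formulas, obtain $r+2M=0$ and $(r+2M)^2=\lambda$ as consistency conditions, and get completeness from the vanishing of all coefficients at generic weight. The only difference is one of detail, not of method: the paper carries out the even-level degree-$(0,0)$ and $(1,1)$ sectors explicitly (showing in particular that $M(r,\lambda)$ admits no such singular vectors), which your plan subsumes under the same recursion analysis.
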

\begin{proof}
We first note that the relations in \eqref{rel0110} and \eqref{relevenweightspace} are verified by direct computation of the action of $\tilde{R}$ on the explicit formulae of singular vectors. 
Thus we focus on the derivation of the explicit formulae. 

To perform a complete search of singular vectors we employ an elementary method: 
Since the singular vector $\ket{\chi}$ belongs to a weight space $\check{M}_N,$ 
$\ket{\chi}$ is a linear combination of the basis of $\check{M}_N. $ 
We determine the combination in such a way that the condition \eqref{sv} is satisfied. 
The $\Z22$-degree of the vectors in $\check{M}_N$ depends on the parity of $N$ (see \eqref{WSdegree}). 
Thus we need to treat even and odd $N$ separately. 
Before starting the search for singular vectors, 
observe that the level of the weight space $\check{M}_N$ is $N=\alpha+k+2m$ which is common for $M(r)$ and 
$M(r,\lambda)$ as seen from \eqref{RonMr} and \eqref{RonMrl}.

\medskip
\noindent
(i) $M(r): $ the degree of the basis \eqref{BasisMr} is $(\alpha+m, k+m)$.
First, we study the weight space of odd level.

\medskip
\noindent
(i-a) Consider a weight space of odd level $2M+1$. 
Then each vector has $\Z22$-degree either $(0,1)$ or $(1,0).$ 
The degree $(0,1)$ and $(1,0)$ basis are given as follows:
\[
\begin{array}{lll}
(0,1) \ : &\ \ket{0,2(M-2j)+1,2j}, &\quad 0 \leq j \leq \left\lfloor \frac{M}{2} \right\rfloor\\[3pt]
&\ \ket{1,2(M-2j-1),2j+1},& \quad 0 \leq j \leq \left\lfloor \frac{M-1}{2} \right\rfloor\\[3pt]
(1,0)\ : &\ \ket{0,2(M-2j-1)+1,2j+1},& \quad 0 \leq j \leq \left\lfloor \frac{M-1}{2} \right\rfloor\\[3pt]
&\ \ket{1,2(M-2j),2j},& \quad 0 \leq j \leq \left\lfloor \frac{M}{2} \right\rfloor
\end{array}
\]
The degree $(0,1)$ and $(1,0)$ singular vectors, denoted by $\ket{\chi_{01}}, \ket{\chi_{10}}$ respectively, 
have the form of
\begin{align*}
\ket{\chi_{01}} &= \sum\limits_{j=0}^{\left\lfloor \frac{M}{2} \right\rfloor} \mu_{j}\ket{0,2(M-2j)+1,2j} + \sum\limits_{j=0}^{\left\lfloor \frac{M-1}{2} \right\rfloor} \nu_{j}\ket{1,2(M-2j-1)+1,2j+1} ,\\
\ket{\chi_{10}} &= \sum\limits_{j=0}^{\left\lfloor \frac{M-1}{2} \right\rfloor} \alpha_{j}\ket{0,2(M-2j-1)+1,2j+1} + \sum\limits_{j=0}^{\left\lfloor \frac{M}{2} \right\rfloor} \beta_{j}\ket{1,2(M-2j),2j}
\end{align*}
and we determine the coefficients by imposing the condition \eqref{sv}. 
For clarity, we indicate explicitly the  coefficients to be determined:
\[
    \alpha_j,\ \nu_j, \quad  j = 0,1,\dots, 
\left\{
\begin{array}{ll}
K-1, & M=2K\\
K, & M=2K+1,
\end{array} \right.
\qquad
\beta_j,\ \mu_j, \quad j = 0, 1, \dots, K
\]

The condition $\tilde{a}_{-}\ket{\chi_{01}}=0$ gives the relations:
\begin{align}
2(M-2j)\mu_{j} + (2j+1)\nu_{j} = 0&, 
\qquad j = 0,1,\dots, 
\left\{
\begin{array}{ll}
K-1, & M=2K\\
K, & M=2K+1
\end{array}    
\right.  \label{101-1}\\
 (j+1)\mu_{j+1} - (r+M+2j+1)\nu_{j} = 0 &, \qquad j = 0,1,\dots, K-1 \label{101-2}
\end{align}
and the additional relation for $M=2K+1$
\begin{align}
(r+2M)\nu_{K} = 0. \label{101-3}
\end{align}
The condition $a_{-}\ket{\chi_{01}}=0$ gives the relations:
\begin{align}
 2(r+M) \mu_{0} - \nu_{0} = 0 &\label{101-4}\\
 (j+1)\mu_{j+1} + (M-2j-1)\nu_{j} = 0 &,\quad j=0,1,\dots, K-1 \label{101-5}\\
 2(r+M-2j)\mu_{j} - (2j+1)\nu_{j} -8(M-2j+1)\nu_{j-1} = 0 &, \label{101-6}\\
&\hspace{-2cm} j = 1, 2, \dots, 
\left\{
\begin{array}{ll}
K-1, & M=2K\\
K, & M=2K+1
\end{array}    
\right.\nn
\end{align}
and the additional relation for $M=2K$
\begin{align}
 r\mu_{K} - 4\nu_{K-1} = 0. \label{101-7}
\end{align}

Eliminating $\mu_j$ from  \eqref{101-2} and \eqref{101-5} we have
\begin{equation}
   (r+2M) \nu_j = 0, \quad 0 \leq j \leq K-1. \label{r2Mrel}
\end{equation}
If $(r+2M) \neq 0,$ then one may see that  $\mu_j = \nu_j = 0$ for all $j$ which means that $\ket{\chi_{01}}=0$.
For $(r+2M) = 0$, the recurrence relations are easily solved to give
\[
	\mu_j = 2^{2j} 
	   \begin{pmatrix}
	      M \\ 2j
	   \end{pmatrix} \mu_0, \qquad
    \nu_j = - 2^{2j+1} 
     \begin{pmatrix}
       M \\ 2j+1 
     \end{pmatrix} \mu_{0}.
\]
%
Thus the existence of $\ket{\chi_{01}}$ given by \eqref{101} has been shown. 

  We now turn to $\ket{\chi_{10}}.$
The condition $\tilde{a}_{-}\ket{\chi_{10}}=0$ gives the relations:
\begin{align}
  (M-2j-1)\alpha_{j} + (j+1)\beta_{j+1} = 0,& \quad j = 0,1,\dots,K-1 \label{110-4}\\
  -(2j+1)\alpha_{j} + 2(r+M+2j)\beta_{j} = 0,& \quad 
 j = 0,1,\dots, 
 \left\{
 \begin{array}{ll}
 K-1, & M=2K\\
 K, & M=2K+1
 \end{array}    
 \right. \label{110-5}
\end{align}
and the additional relation for $M=2K$
\begin{equation}
   (r+2M)\beta_{K} = 0. \label{110-6}
\end{equation}
The condition $a_{-}\ket{\chi_{10}}=0$ gives the relations:
\begin{align}
& (2j+1)\alpha_{j} + 2(M-2j)\beta_{j} = 0, 
\quad j = 0, 1, \dots, 
\left\{
\begin{array}{ll}
K-1, & M=2K\\
K, & M=2K+1
\end{array}    
\right. \label{110-1}\\
& (r+M-2j-1)\alpha_{j} - 4(M-2j)\beta_{j} - (j+1)\beta_{j+1} = 0, \quad j=0,1,\dots, K-1 \label{110-2} 
\end{align}
and the additional relation for $M=2K+1$
\begin{align}
r\alpha_{K} - 4\beta_{K} = 0. \label{110-3}
\end{align}

Comparing the above relations for $\ket{\chi_{10}}$ with those for $\ket{\chi_{01}}$ 
one may recognize that coefficients are related as $\alpha_j = \nu_j, \ \beta_j = \mu_j.$
This proves the existence of $\ket{\chi_{10}} $ and its explicit formula \eqref{110}. 

\medskip\noindent
(i-b) We now turn to the weight space of even level $2M.$ 
In this case each vector has $\Z22$-degree either $(0,0)$ or $(1.1)$. 
The degree $(0,0)$ and $(1,1)$ basis are given as follows:
\[
\begin{array}{lll}
(0,0)\ : &\ \ket{0, 2(M-2j), 2j}, & \quad 0 \leq j \leq \left\lfloor \frac{M}{2} \right\rfloor\\[3pt]
&\ \ket{1,2(M-2j-1)-1,2j+1},& \quad 0 \leq j \leq \left\lfloor \frac{M}{2} \right\rfloor - 1\\[3pt]
(1,1)\ : &\ \ket{0,2(M-2j-1),2j+1},& \quad 0 \leq j \leq \left\lfloor \frac{M-1}{2} \right\rfloor\\[3pt]
&\ \ket{1,2(M-2j)-1,2j},& \quad 0 \leq j \leq \left\lfloor \frac{M-1}{2} \right\rfloor
\end{array}
\]
The degree $(0,0)$ and $(1,1)$ singular vectors, denoted by $\ket{\chi_{00}}, \ket{\chi_{11}}$ respectively, 
have the form of
\begin{align*}
\ket{\chi_{00}} &= \sum\limits_{j=0}^{\left\lfloor \frac{M}{2} \right\rfloor} \rho_{j}\ket{0,2(M-2j),2j} + \sum\limits_{j=0}^{\left\lfloor \frac{M}{2} \right\rfloor -1} \sigma_{j}\ket{1,2(M-2j-1)-1,2j+1} ,\\
\ket{\chi_{11}} &= \sum\limits_{j=0}^{\left\lfloor \frac{M-1}{2} \right\rfloor} \gamma_{j}\ket{0,2(M-2j-1),2j+1} + \sum\limits_{j=0}^{\left\lfloor \frac{M-1}{2} \right\rfloor} \delta_{j}\ket{1,2(M-2j)-1,2j}.
\end{align*}
We impose the condition \eqref{sv} to determine the coefficients. 
More explicitly, we determine the following coefficients:
\begin{align*}
  \gamma_j, \ \delta_j, &\quad 
  j= 0, 1, \dots \begin{cases}
                     K-1, & M = 2K
                    \\
                    K, & M=2K+1,
                 \end{cases}
  \\
  \rho_j, &\quad j= 0, 1, \dots, K,
  \qquad
  \sigma_j,  \quad j = 0, 1, \dots, K-1
\end{align*}
The number of $\rho_j$ and $ \sigma_j$ is common for  even and odd values of $M$. 

The condition $\tilde{a}_{-}\ket{\chi_{00}}=0$ gives the relations:
\begin{align}
  2(M-2j)\rho_{j} - (2j+1)\sigma_{j} = 0,& \quad j = 0, 1, \cdots, K-1 \label{100-1}\\
  j\rho_{j} + (r+M+2j-1)\sigma_{j-1} = 0,& \quad j=1, 2, \cdots, K \label{100-2}
\end{align}
and the additional relation for $M=2K+1:$
\begin{equation}
   \rho_K = 0.
\end{equation}
The condition  $a_{-}\ket{\chi_{00}}=0$ gives the relations:
 \begin{align}
 2M\rho_{0} + \sigma_{0} = 0& \label{100-4}\\
 2(M-2j)\rho_{j} + (2j+1)\sigma_{j} + 8(M-2j)\sigma_{j-1} = 0&,\quad j = 1,2, \cdots, K-1 \label{100-5}\\
 -j\rho_{j} + (r+M-2j-1)\sigma_{j-1} = 0&, \quad j = 1, 2, \cdots, K \label{100-6}
 \end{align}
and the additional relation for $M=2K+1:$
\begin{equation}
 \rho_{K} + 4\sigma_{K-1} = 0 \label{100-7}
\end{equation}

We are able to suppose that  $ M \neq 0 $ as $M=0$ corresponds to the lowest weight space.
Then the direct computation leads to $ \rho_j = \sigma_j = 0 $ for all $j$.
Therefore, we have shown that $\ket{\chi_{00}} = 0.$ 

Next we consider $\ket{\chi_{11}}$. 
The condition $\tilde{a}_{-}\ket{\chi_{11}}=0$ leads the following relations:
\begin{align}
 (M-2j+1)\gamma_{j-1} - j\delta_{j} &= 0,\quad j = 1, 2, \cdots, 
\left\{
\begin{array}{ll}
K-1, & M=2K\\
K, & M=2K+1
\end{array}    
\right. \label{111-1}\\
 (2j+1)\gamma_{j} + 2(r+M+2j)\delta_{j} &= 0,\quad j = 0, 1, \cdots, 
\left\{
\begin{array}{ll}
K-1, & M=2K\\
K, & M=2K+1
\end{array}    
\right. \label{111-2}
\end{align}
and the additional relation for $M=2K$
\begin{align}
\gamma_{K-1} = 0. \label{111-3}
\end{align}
From the condition  $a_{-}\ket{\chi_{11}} = 0$ we have 
\begin{align}
 (M-2j+1)(\gamma_{j-1}+4\delta_{j-1}) + j\delta_{j} &= 0, \quad j = 1,2, \cdots, 
\left\{
\begin{array}{ll}
K-1, & M=2K\\
K, & M=2K+1
\end{array}    
\right. \label{111-4}\\
 (2j+1)\gamma_{j} - 2(r+M-2j-2)\delta_{j} &= 0,\quad j = 0, 1, \cdots, 
\left\{
\begin{array}{ll}
K-1, & M=2K\\
K, & M=2K+1
\end{array}    
\right. \label{111-5}
\end{align}
and for $M=2K$
 \begin{equation}
 \gamma_{K-1} + 4\delta_{K-1} = 0. \label{111-6}
 \end{equation}

For $M=2K,$ it is immediate to see that $ \ket{\chi_{11}} = 0 .$
While, for $M=2K+1$ we have
\[
   \delta_{j} = (-4)^j \dbinom{K}{j}\delta_{0}
\]
and two differrent expressions of $\gamma_j$ from \eqref{111-2} and \eqref{111-5} which are given by
\[
  \gamma_j = -\frac{2}{2j+1} (r+2K+2j+1) \delta_j
\]
and
\[
    \gamma_j = \frac{2}{2j+1} (r+2K-2j-1) \delta_j.
\]
These two formulae coincide if and only if $ r+2K = 0 $ and under this constraint on $r $ and $K$ we have 
\[
  \gamma_j = \frac{1}{2}(-4)^{j+1} \dbinom{K}{j}\delta_{0}.
\]
Therefore, there exists a singular vector $\ket{\chi_{11}} $ in the weight space of level $ 2(2K+1)$ if $ r+2K =0.$ 
Replacing $K$ with $M$ in the above obtained $ \gamma_j, \delta_j,$ 
we get an explicit formula of $\ket{\chi_{11}}$ given by \eqref{111}.   

\medskip
\noindent
(ii) $ M(r,\lambda)$: the $\Z22$-degree of the basis \eqref{BasisMrl} is 
$ (\alpha+m+\beta, k+m+\beta)$. 
The search of singular vectors is carried out in a way similar to that of $M(r).$ 
A main difference from $M(r)$ is that the dimension of each weight space is doubled. 

\medskip
\noindent
(ii-a) Consider a weight space of odd level $2M+1$. 
Then each vector has $\Z22$-degree either $(0,1)$ or $(1,0).$ 
The degree $(0,1)$ and $(1,0)$ basis are given as follows:
\begin{align*} 
(0,1) : &\quad \ket{0,2(M-j)+1,j;\ol{j}}, \quad \ket{1,2(M-j),j;\ol{j+1}} \\[3pt]
(1,0) : &\quad \ket{0,2(M-j)+1,j;\ol{j+1}}, \quad \ket{1,2(M-j),j;\ol{j}}
\end{align*}
where $0 \leq j \leq M$. 
The degree $(1,0)$ basis is obtained from the degree $(0,1)$ basis by exchanging $\ol{j} $ and $\ol{j+1}.$  
This means that a search of degree $(1,0)$ singular vectors is essentially same as degree $ (0,1)$ case provided that $\ol{j}$ and $\ol{j+1}$ are exchanged.  
Therefore, it is enough to carry out a search of degree $(0,1)$ singular vectors.  
The singular vector $\ket{\chi_{01}}$ of degree (0,1) has the form of
\[
\ket{\chi_{01}} = \sum_{j=0}^{M}\big( \mu_{j} \ket{0,2(M-j)+1,j; \ol{j}} + \nu_{j}\ket{1,2(M-j),j; \ol{j+1}} \big).
\]
From the condition $\tilde{a}_{-}\ket{\chi_{01}}=0$ we obtain the relations for the coefficients $\mu_j$ and  $\nu_j$:
\begin{align}
  \lambda^{\overline{M}} \mu_{M} - ( r+2M ) \nu_{M} &= 0,  \label{201-3}
  \\
   (j+1) \mu_{j+1} + 2 \lambda^{\overline{j}} \mu_{j} - 2 ( r + M + j ) \nu_{j} &= 0, \label{201-1}\\
 2 ( M-j )\mu_{j} + (j+1) \nu_{j+1} &= 0  \label{201-2}
\end{align}
where $j$ runs from $0$ to $M-1.$ 
From the condition $a_{-}\ket{\chi_{01}}=0$ we have
\begin{align}
 2 ( r+M ) \mu_{0} - \nu_{1} -2\lambda \nu_{0} &= 0, \quad \label{201-4}\\
 2 ( r+M-j )\mu_{j} - (j+1) \nu_{j+1} - 8( M - j +1) \nu_{j-1} -2 \lambda^{\overline{j+1}}  \nu_{j} &= 0 , \quad  j = 1,2, \cdots, M-1\label{201-6}\\
 r\mu_{M} - 4 \nu_{M-1} - \lambda^{\overline{M+1}} \nu_{M} &= 0, \label{201-7} \\
 (j+1) \mu_{j+1} + 2 ( M-j ) \nu_{j} &= 0 , \quad  j = 0,1,\cdots, M-1  \label{201-5}
\end{align}
where $\lambda \neq 0.$

Getting rid of $ \nu_j $ from \eqref{201-1} by \eqref{201-5}  one has
\begin{equation}
   (j+1) (r+2M)\mu_{j+1} + 2\lambda^{\ol{j}}(M-j) \mu_j = 0, \quad j= 0, 1, \dots, M-1
   \label{mu-rec-01}
\end{equation}
For $r+2M = 0$, one may see $ \mu_j= \nu_j = 0$ so that $\ket{\chi_{01}}=0$.
For $r+2M \neq 0$, the explicit formulae of $\mu_j$ and $\nu_j$ are given as
\begin{equation}
\mu_{j} = (-2)^j \dfrac{\lambda^{\left\lfloor \frac{j}{2} \right\rfloor}}{(r+2M)^j}  \dbinom{M}{j}  \mu_{0}. \label{2mu}
\end{equation}
and
\begin{equation}
 \nu_{j} = (-2)^j \dfrac{ \lambda^{ \left\lfloor \frac{j+1}{2} \right\rfloor }}{( r + 2M )^{j+1}} \dbinom{M}{j} \mu_{0}, \quad  j = 0,1,\cdots, M. \label{2nu}
\end{equation}
To solve \eqref{201-7}, we find that the constraint $(r+2M)^2 = \lambda$ is necessary. 
It is then straightforward to verify that \eqref{2mu} and \eqref{2nu} with this constraint solve all the relations. 
Thus we have non-vanishing $\mu_j, \nu_j$ given by
\[
  \mu_{j} = (-2)^j \frac{(r+2M)^{\ol{j+1}}}{r+2M} \dbinom{M}{j} \mu_{0},
  \qquad
  \nu_{j} = (-2)^j \frac{(r+2M)^{\ol{j}}}{r+2M} \dbinom{M}{j} \mu_{0}.
\]
Therefore, we have shown the existence of $\ket{\chi_{01}} $ given by \eqref{201} for $ (r+2M)^2 = \lambda.$ 

As we already mentioned, the search of degree $(1,0)$ singular vectors  is same as the case of degree $(0,1)$ provided that $ \ol{j} $ and $ \ol{j+1}$ are exchanged. 
Thus we conclude there exists $\ket{\chi_{10}}$ given by \eqref{210} if $ (r+2M)^2 = \lambda.$ 

\medskip
\noindent
(ii-b) Finally, we consider a weight space of even level $2M.$ 
The basis is given by
\begin{align*}
(0,0) : &\quad \ket{0,2(M-j),j;\ol{j}}, \quad j = 0,1,\cdots, M\\
&\quad \ket{1,2(M-j)-1,j;\ol{j+1}}, \quad j=0,1\cdots, M-1\\[5pt]
(1,1) :& \quad \qquad \ol{j} \quad \longleftrightarrow \quad \ol{j+1} 
\end{align*}
where the basis of degree $(1,1)$ is obtained from that of degree $(0,0)$ exchanging $\ol{j}$ and  $\ol{j+1}$.
Therefore, as before, it is sufficient to look for only the degree $(0,0)$ singular vector $\ket{\chi_{00}}$. 
$\ket{\chi_{00}}$ is written as follows:
\[
\ket{\chi_{00}} = \sum_{j=0}^{M} \rho_{j} \ket{0,2(M-j),j;\ol{j}} + \sum_{j=0}^{M-1} \sigma_{j} \ket{1,2(M-j)-1,j;\ol{j+1}}.
\]
One obtain the following relations from the condition $\tilde{a}_{-}\ket{\chi_{00}}=0:$
\begin{align}
 (j+1) \rho_{j+1} + 2(r+M+j)\sigma_{j} &= 0, \quad j = 0,1,\cdots, M-1 \label{200-1}\\
 2(M-j)\rho_{j} -(j+1) \sigma_{j+1} - 2\lambda^{\ol{j+1}}\sigma_{j} &= 0, \quad j = 0,1,\cdots, M-2 \label{200-2}\\
 \rho_{M-1} - \lambda^{\ol{M}}\sigma_{M-1} &= 0 \label{200-3}
\end{align}
and also from $a_{-}\ket{\chi_{00}}=0:$
\begin{align}
 2M\rho_{0} - \sigma_{1} + 2\lambda\sigma_{0} &= 0 \label{200-4}\\
 2(M-j)\rho_{j} - (j+1) \sigma_{j+1} + 8(M-j)\sigma_{j-1} + 2\lambda^{\ol{j+1}}\sigma_{j} &= 0, \quad 
 j = 1,2,\cdots, M-2 \label{200-6}\\
\rho_{M-1} + 4\sigma_{M-2} + \lambda^{\ol{M}}\sigma_{M-1} &= 0 \label{200-7} 
\\
 (j+1)\rho_{j+1} - 2(r+M-j-2)\sigma_{j} &= 0,\quad j = 0,1,\cdots, M-1 \label{200-5}
\end{align}
It is not difficult to see that these recurrence relations lead to $\rho_j = \sigma_j = 0$ for all $j.$ Therefore we get $\ket{\chi_{00}}=0$.
$ \ket{\chi_{11}} = 0 $ can be shown in the same way provided that $\ol{j}$ and $ \ol{j+1}$ are exchanged. 
\end{proof}

%

\section{Irreducible modules} \label{SEC:IrMod}
\setcounter{equation}{0}

Our final task is to list up all irreducible modules. 
It is immediate from Proposition \ref{prop:mis} and Theorem \ref{thm:sv} that if $ r+2M \neq 0 $ then the Verma module $M(r)$ is irreducible and if $ (r+2M)^2 \neq \lambda $ then 
the Verma module $M(r,\lambda)$ is irreducible. 
This proves the cases (i) and (iii) of Theorem \ref{thm:ILWM}. 
In order to verify the cases (ii) and (iv), we need to specify the maximal invariant submodule $W.$ 
Let $ \omega $ be the subspace of $ M(r) $ or $ M(r,\lambda)$ spanned by the singular vectors $ \ket{\chi_{01}}$ and 
$ \ket{\chi_{10}}$ of Theorem \ref{thm:sv}. 
Then $ W = U(\n^+) \otimes \omega $ for both $ M(r) $ and $ M(r,\lambda).$ 
This is obvious for $ M(r,\lambda)$ due to Proposition \ref{prop:mis}, but highly non-trivial for $ M(r).$ 
First, we establish this fact for $M(r)$ and then investigate  the dimension  of irreducible modules. 
This fact is a consequence of the next Proposition.

\begin{prop}\label{prop:Mrsv}
The singular vector $ \ket{\chi_{11}} \in M(r)$  given in \eqref{111} belongs to $  U(\n^+) \otimes \omega. $ 
\end{prop}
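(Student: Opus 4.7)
The goal is to exhibit $\ket{\chi_{11}}$ explicitly as an element of $U(\n^+)\omega$ by finding operators $X, Y \in U(\n^+)$ with $\ket{\chi_{11}} = X\ket{\chi_{01}} + Y\ket{\chi_{10}}$. Level counting forces $X$ and $Y$ to have level $2M+1$, and matching the $\Z22$-degrees with $\ket{\chi_{11}}$ forces $X$ to be of degree $(1,0)$ and $Y$ of degree $(0,1)$.

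My first step is to identify a manageable basis of candidate monomials in $U(\n^+)$. From the relations $\tilde{a}_+^2 = -a_+^2 = -2L_+$ together with the facts that $L_+$ and $\tilde{L}_+$ commute with one another and with $a_+, \tilde{a}_+$ (all read off from \eqref{Z22osp12}), a PBW-type basis of $U(\n^+)$ consists of monomials $L_+^p \tilde{L}_+^q \tilde{a}_+^\varepsilon a_+^\delta$ with $\varepsilon, \delta \in \{0,1\}$ and $p, q \geq 0$. The level-$(2M+1)$ constraint and the required $\Z22$-degree narrow the ansatz to monomials of the form $L_+^p \tilde{L}_+^q \tilde{a}_+$ and $L_+^p \tilde{L}_+^q a_+$ with $p + q = M$, further restricted by a parity condition on $q$.

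With this reduced set I would formulate the ansatz
\[
   \ket{\chi_{11}} \overset{?}{=} \sum_{p+q=M} L_+^p \tilde{L}_+^q \bigl(c_{p,q} \tilde{a}_+ + c'_{p,q} a_+\bigr)\ket{\chi_{01}} + \sum_{p+q=M} L_+^p \tilde{L}_+^q \bigl(d_{p,q} \tilde{a}_+ + d'_{p,q} a_+\bigr)\ket{\chi_{10}},
\]
expand both sides in the basis \eqref{BasisMr} using the action formulae of Section \ref{SEC:Sing}, and solve for the coefficients by matching. The smallest cases are a useful sanity check: for $M=0$ one finds $\ket{\chi_{11}} = \tfrac{1}{2}(\tilde{a}_+\ket{\chi_{01}} + a_+\ket{\chi_{10}})$, and a short computation for $M=1$ yields $\ket{\chi_{11}} = (\tilde{a}_+ L_+ + \tilde{L}_+ a_+)\ket{\chi_{01}} + (a_+ L_+ + \tilde{L}_+ \tilde{a}_+)\ket{\chi_{10}}$. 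The pattern suggests that the nonzero coefficients can be taken proportional to $\binom{M}{q}$, with signs and powers of $2$ matching the $(-4)^j \binom{M}{j}$ structure of \eqref{111}.

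The main obstacle will be the combinatorial identity needed to verify the ansatz for arbitrary $M$. Applying $L_+^p \tilde{L}_+^q \tilde{a}_+$ to the expansion \eqref{101} generates a double sum over $\binom{M}{2j}$ and $\binom{M}{2j+1}$ which must collapse, after summing over $p+q = M$ with the proposed coefficients, to the single-index sum appearing in \eqref{111}. I would tackle this either by a direct Chu--Vandermonde-type manipulation, or, more cleanly, by induction on $M$: multiplying the established $(M-1)$-case expression by an appropriate linear combination of $L_+$ and $\tilde{L}_+$ and invoking $\tilde{R}\ket{\chi_{01}} = (2M+1)\ket{\chi_{10}}$ from \eqref{rel0110} to relate consecutive levels, the induction step should reduce the identity to a single-term verification rather than a full binomial sum.
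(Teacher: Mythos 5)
Your overall strategy is the same as the paper's: realize $\ket{\chi_{11}}$ by applying level-$(2M+1)$ elements of $U(\n^+)$ of the appropriate $\Z22$-degree to $\ket{\chi_{01}}$ and $\ket{\chi_{10}}$, expand everything in the basis \eqref{BasisMr}, and match against \eqref{111}; your $M=0$ and $M=1$ identities are correct and agree with the paper's expressions. The problem is that the proposal stops exactly where the proof has to happen. For general $M$ you only conjecture the shape of the coefficients (``proportional to $\binom{M}{q}$'') and defer the verification to an unproven ``combinatorial identity,'' offering two possible strategies without carrying either out --- but that verification \emph{is} the content of Proposition \ref{prop:Mrsv}. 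The paper closes this gap without ever guessing closed-form coefficients: it computes the expansions of the paired vectors in \eqref{L11vec1} and \eqref{L11vec2}, which reduces the matching to a square linear system for $M+1$ unknowns $c_p$, and then argues that the banded coefficient matrix (illustrated for $M=4,5$) is nonsingular, so the $c_p$ exist uniquely. Solvability of the linear system, not an explicit binomial identity, is all that is needed, and your plan provides neither.

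Your fallback induction would also fail as described: the level-$(2(M-1)+1)$ singular vectors exist only when $r+2(M-1)=0$, i.e.\ in the Verma module $M(r+2)$, not in the module $M(r)$ with $r+2M=0$ in which $\ket{\chi_{11}}$ lives, so ``multiplying the established $(M-1)$-case expression by a combination of $L_+$ and $\tilde{L}_+$'' has no meaning inside the module at hand; moreover $\tilde{R}$ preserves the level (its $\mathrm{ad}R$-eigenvalue is $0$), so \eqref{rel0110} relates the two singular vectors at the \emph{same} level and cannot ``relate consecutive levels.'' A smaller but consequential slip: the vanishing of the brackets of $\tilde{L}_+$ with $a_+,\tilde{a}_+$ in \eqref{Z22osp12} means these elements \emph{anti}commute, since $(1,1)\cdot(0,1)=(1,1)\cdot(1,0)=1$ makes the bracket an anticommutator; your PBW-type reduction survives, but the reordering signs are precisely what make your $M=1$ formula consistent with \eqref{111}, so they must be tracked when you set up the linear system for general $M$.
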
 

This proposition indicates that the invariant submodule $ U(\n^+) \otimes \ket{\chi_{11}} $ is a subspce of  $U(\n^+) \otimes \omega. $ 
Thus the maximal invariant submodule of $M(r)$ is given by $ W = U(\n^+) \otimes \omega. $

\begin{proof}
 Suppose that $\ket{\chi_{01}}, \ket{\chi_{10}}$ exist in the level $2M+1$ weight space $M_{2M+1}(r)$, 
then $\ket{\chi_{11}}$ is in the level $2(2M+1)$ weight space $M_{2(2M+1)}(r) $ (Theorm \ref{thm:sv}). 
We shall show that $ \ket{\chi_{11}}$ is a linear combination of the vectors in 
the level $2(2M+1) $ weight space $ (U(\n^+) \otimes \omega)_{2(2M+1)}$. 
The degree $(1,1)$ vectors in $ (U(\n^+) \otimes \omega)_{2(2M+1)}$ are given as follows:
\begin{equation}
  \tilde{a}_{+}a_{+}^{2(M-2k)}\tilde{L}_{+}^{2k} \ket{\chi_{01}},
  \quad
  \quad a_{+}^{2(M-2k)+1}\tilde{L}_{+}^{2k}\ket{\chi_{10}},
  \quad
  k = 0, 1, \dots, \left\lfloor \frac{M}{2} \right\rfloor 
  \label{L11vec1}
\end{equation}
and 
\begin{equation}
  a_{+}^{2(M-2k)-1}\tilde{L}_{+}^{2k+1}\ket{\chi_{01}},
  \quad
  \tilde{a}_{+}a_{+}^{2(M-2k-1)}\tilde{L}_{+}^{2k+1} \ket{\chi_{10}},
  \quad
  k = 0, 1, \dots, \left\lfloor \frac{M-1}{2} \right\rfloor
  \label{L11vec2}
\end{equation}
Using the explicit formula of $ \ket{\chi_{01}}, \ket{\chi_{10}}$, 
one may verify by straightforward computation that the sum of the vectors in \eqref{L11vec1} and \eqref{L11vec2} are given as follows:
\begin{align*}
  & \tilde{a}_{+}a_{+}^{2(M-2k)}\tilde{L}_{+}^{2k} \ket{\chi_{01}}
  +
   a_{+}^{2(M-2k)+1}\tilde{L}_{+}^{2k}\ket{\chi_{10}}
  \nn \\
  & \quad
   =
   \sum_{j=k}^{\lfloor \frac{M}{2} \rfloor +k} 
   2^{2j+1-2k} \dbinom{M}{2j-2k} 
   \big( -2\ket{0,4(M-j),2j+1} + \ket{1,4(M-j)+1,2j} \big),
  \\[5pt]
  &  a_{+}^{2(M-2k)-1}\tilde{L}_{+}^{2k+1}\ket{\chi_{01}}
  +
  \tilde{a}_{+}a_{+}^{2(M-2k-1)}\tilde{L}_{+}^{2k+1} \ket{\chi_{10}}
  \nn \\
  & \quad 
  = 
  \sum_{j=k+1}^{\lfloor \frac{M+1}{2}\rfloor +k}
  2^{2j-2k} \dbinom{M}{2j-2k-1} \big( -2\ket{0,4(M-j),2j+1} + \ket{1,4(M-j)+1,2j} \big).
\end{align*}
A linear combination of these vectors
\begin{align*}
  &\sum_{k=0}^{\left\lfloor \frac{M}{2} \right\rfloor} c_{2k} 
   ( 
     \tilde{a}_{+}a_{+}^{2(M-2k)}\tilde{L}_{+}^{2k} \ket{\chi_{01}}
     +
     a_{+}^{2(M-2k)+1}\tilde{L}_{+}^{2k}\ket{\chi_{10}}
   )
   \\
   & \qquad\qquad
  + \sum_{k=0}^{\left\lfloor \frac{M-1}{2} \right\rfloor} c_{2k+1}
  (
    a_{+}^{2(M-2k)-1}\tilde{L}_{+}^{2k+1}\ket{\chi_{01}}
  +
  \tilde{a}_{+}a_{+}^{2(M-2k-1)}\tilde{L}_{+}^{2k+1} \ket{\chi_{10}}
  )
\end{align*}
coincides with $ \ket{\chi_{11}}$ given in \eqref{111} if the coefficients $ c_p$ satisfies the relations:
\begin{align*}
 \sum_{p=0}^{2j} 2^{2j+1-p} \dbinom{M}{2j-p} c_p
 &= (-4)^j \dbinom{M}{j},
 \quad j = 0, 1, \dots, \left\lfloor \frac{M}{2} \right\rfloor
 \\
 \sum_{p=2j-M}^{M} 2^{2j+1-p} \dbinom{M}{2j-p} c_{p}
 &= (-4)^j \dbinom{M}{j}, 
 \quad j=  \left\lfloor \frac{M}{2} \right\rfloor+1, \cdots, M
\end{align*}
This is a system of linear algebraic equations for $ c_p. $ 
We give $M=4$ and $M=5$ coefficient matrix as an example. 
$  M = 4: $
\[
  \begin{pmatrix}
    2\dbinom{4}{0} & 0 & 0 & 0 & 0 
    \\[12pt]
    2^3 \dbinom{4}{2} & 2^2\dbinom{4}{1} & 2 \dbinom{4}{0} & 0 & 0 
    \\[12pt]
    2^5 \dbinom{4}{4} & 2^4 \dbinom{4}{3} & 2^3 \dbinom{4}{2} & 2^2 \dbinom{4}{1} & 2 \dbinom{4}{0}
    \\[12pt]
    0 & 0 & 2^5 \dbinom{4}{4} & 2^4\dbinom{4}{3} & 2^3 \dbinom{4}{2}
    \\[12pt]
    0 & 0 & 0 & 0 & 2^5 \dbinom{4}{4}
  \end{pmatrix}.
\] 
$  M = 5: $
\[
  \begin{pmatrix}
    2\dbinom{5}{0} & 0 & 0 & 0 & 0 & 0
    \\[12pt]
    2^3 \dbinom{5}{2} & 2^2\dbinom{5}{1} & 2 \dbinom{5}{0} & 0 & 0 & 0 
    \\[12pt]
    2^5 \dbinom{5}{4} & 2^4 \dbinom{5}{3} & 2^3 \dbinom{5}{2} & 2^2 \dbinom{5}{1} & 2 \dbinom{5}{0} & 0
    \\[12pt]
    0 &  2^6 \dbinom{5}{5} & 2^5 \dbinom{5}{4} & 2^4\dbinom{5}{3} & 2^3 \dbinom{5}{2} 
    & 2^2 \dbinom{5}{1}
    \\[12pt]
    0 & 0 & 0 & 2^6 \dbinom{5}{5} & 2^5 \dbinom{5}{4} & 2^4\dbinom{5}{3} 
    \\[12pt]
    0 & 0 & 0 & 0 & 0 & 2^6 \dbinom{5}{5}
  \end{pmatrix}.
\] 
These matrices are not singular because one may make them triangular  with no non-zero diagonal entries. 
Thus $c_p$'s are determined uniquely and Proposition \ref{prop:Mrsv} is proved for $M=4, 5.$ 

The general form of the coefficient matrix, denoted by $A,$ is as follows: 
$A$ is a $ (M+1) \times (M+1)$ matrix and its entry is $0$ or one of the integers in the sequence:
\begin{equation}
 2^{M+1} \dbinom{M}{M}, \quad 
 2^M \dbinom{M}{M-1}, \quad 2^{M-1} \dbinom{M}{M-2}, \quad \dots, \quad 
 2\dbinom{M}{0}.
 \label{sequence}
\end{equation}
The non-vanishing entry of the first row of $A$ is the right most term of \eqref{sequence}: $ A_{11} = 2\dbinom{M}{0}. $ 
The non-vanishing entries of the second row consist of the three terms from the right of \eqref{sequence}:
\[
  A_{21} = 2^3 \dbinom{M}{2}, \quad A_{22}=  2^2 \dbinom{M}{1},
  \quad
  A_{23} = 2\dbinom{M}{0}
\]
In the third row two more terms are taken from \eqref{sequence}:
\[
  A_{31} = 2^5 \dbinom{M}{4}, \quad 
  A_{32} = 2^4 \dbinom{M}{3}, \quad 
  A_{33} = 2^3 \dbinom{M}{2}, \quad 
  A_{34} = 2^2 \dbinom{M}{1}, \quad 
  A_{35} = 2 \dbinom{M}{0}.
\]
In this way, as row goes further by one, two more terms are taken from \eqref{sequence}. 
If $M$ is even, the $ (\frac{M}{2}+1)$th row is the sequence \eqref{sequence} itself. 
In the $(\frac{M}{2}+2)$th row the sequence moves right by two columns and $0$ is put in the empty slot of the row. 
Repeating this process, the $(M+1)$th row of $A$ has only one  non-vanishing entry 
$ A_{M+1,M+1} = 2^{M+1} \dbinom{M}{M}.$ 
If $M$ is odd, there is a slight difference in the middle of this process. 
That is, the $\frac{1}{2}(M+1)$th row reads
\[
 2^M \dbinom{M}{M-1}, \quad 2^{M-1} \dbinom{M}{M-2}, \quad \dots, \quad 
 2\dbinom{M}{0}, \quad 0
\]
and this sequence moves by two columns which gives the $\frac{1}{2}(M+3)$th row as follows:
\[
 0, \quad 2^{M+1} \dbinom{M}{M}, \quad 
 2^M \dbinom{M}{M-1}, \quad 2^{M-1} \dbinom{M}{M-2}, \quad \dots, \quad 
 2^2\dbinom{M}{1}
\]
Thus, because of the same reason as the $M=4$ and $M=5$ examples the coefficient matrix is not singular so that $c_p$'s are determined uniquely. 
This completes the proof of Proposition \ref{prop:Mrsv}. 
\end{proof}

In this way, we could identify the maximal invariant submodules of $ M(r) $ and $ M(r,\lambda).$ 
We now turn to count the dimension of the quotient modules. 
This can be done by counting the dimension of each weight space of $W.$ 
Since $W $ has the common structure for $ M(r) $ and $ M(r,\lambda),$ 
that is, $W$ is constructed by repeated application of the elements of $\n^+$ on the two dimensional lowest weight space $\omega$ spanned by $ \ket{\chi_{01}} $ and $\ket{\chi_{10}},$ 
$ \dim W_N$ is also common for $ M(r) $ and $ M(r,\lambda).$ 
Recalling that the singular vectors $ \ket{\chi_{01}}, \ket{\chi_{10}}$ exist in the weight space $W_{2M+1}$, one may prove the following:
\begin{prop}\label{prop:2}
 For a non-negative integer $q$, $ \dim W_{2M+1+q} = 2(q+1).$ 
 Let $\ket{\chi}$ be $\ket{\chi_{01}}$ or $\ket{\chi_{10}},$ 
 then the basis of $ W_{2M+1+q} $ is given by
 \begin{equation}
    (\tilde{a}_{+}a_{+})^{j}\, a_{+}^{q-2j}\ket{\chi}, \qquad
    (a_{+}\tilde{a}_{+})^{j}\, a_{+}^{q-2j}\ket{\chi} 
    \label{Wbasis}
 \end{equation}
 where $j$ runs from $0$ to $ \frac{q}{2}$ for $q$ even and from $0$ to $ \frac{q-1}{2}$ for $q$ odd. 
 In addition, there exist one more vector for $q$ odd which is given by $ (\tilde{a}_{+}a_{+})^{\frac{q-1}{2}}\, \tilde{a}_{+}\ket{\chi}.$
\end{prop}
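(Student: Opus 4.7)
The plan is to proceed in three steps: get a dimension upper bound from the size of $U(\n^+)_q$, verify that the listed $2(q+1)$ vectors span $W_{2M+1+q}$, and then prove linear independence. The first two steps are essentially PBW bookkeeping; the third is the main obstacle.

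For the upper bound I would apply the Poincaré--Birkhoff--Witt theorem to $\n^+$. Inspecting \eqref{Z22osp12} one sees that the only non-trivial relations inside $\n^+$ are $a_+^2 = 2L_+$, $\tilde{a}_+^2 = -2L_+$ and $[a_+,\tilde{a}_+] = -4\tilde{L}_+$, with $L_+,\tilde{L}_+$ central in $\n^+$. Hence a PBW-type basis of $U(\n^+)$ is $\{L_+^a \tilde{L}_+^b a_+^c \tilde{a}_+^d : a,b \in \bZ_{\geq 0},\; c,d \in \{0,1\}\}$, and counting solutions of $2a+2b+c+d=q$ yields $\dim U(\n^+)_q = q+1$. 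Since $W_{2M+1+q} = U(\n^+)_q \cdot \omega$ with $\dim \omega = 2$, this already gives $\dim W_{2M+1+q} \leq 2(q+1)$.

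For the spanning statement I would use the identities $L_+ = \tfrac{1}{2} a_+^2$ and $\tilde{L}_+ = \tfrac{1}{4}(\tilde{a}_+ a_+ - a_+ \tilde{a}_+)$ to rewrite any PBW monomial as a polynomial in $a_+,\tilde{a}_+$, and then reduce an arbitrary weight-$q$ word in $a_+,\tilde{a}_+$ to the claimed form by repeatedly applying $\tilde{a}_+^2 = -a_+^2$ to collapse adjacent $\tilde{a}_+$'s and $\tilde{a}_+ a_+ = a_+ \tilde{a}_+ + 4\tilde{L}_+$ to gather the remaining $\tilde{a}_+$'s into blocks $\tilde{a}_+ a_+$ or $a_+ \tilde{a}_+$ sitting to the left of a string of $a_+$'s. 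Precisely one unpaired $\tilde{a}_+$ survives when $q$ is odd, which accounts for the extra basis vector $(\tilde{a}_+ a_+)^{(q-1)/2}\tilde{a}_+ \ket{\chi}$. Applying these $q+1$ elements to each of $\ket{\chi_{01}}, \ket{\chi_{10}}$ gives the required spanning family of size $2(q+1)$.

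The main obstacle is linear independence of the $2(q+1)$ vectors inside $M(r)$ or $M(r,\lambda)$. Here I would exploit that the Verma modules are free over $U(\n^+)$: $M(r,\lambda) \cong U(\n^+) \otimes \nu(r,\lambda)$, and analogously $M(r) \cong U(\n^+) \otimes \nu(r)$. Writing $\ket{\chi_{01}} = P_{01} \ket{0}$ and $\ket{\chi_{10}} = P_{10} \ket{0}$ with $P_{01},P_{10} \in U(\n^+)_{2M+1}$ read off from Theorem \ref{thm:sv}, a would-be relation $u\ket{\chi_{01}} + v\ket{\chi_{10}} = 0$ with $u,v \in U(\n^+)_q$ lifts to $uP_{01} + vP_{10} = 0$ in $U(\n^+)$. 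Splitting by $\Z22$-degree decouples the contributions of $P_{01}$ and $P_{10}$; expanding both sides in the PBW basis of $U(\n^+)_{2M+1+q}$ and isolating the coefficients of carefully chosen extremal monomials (for instance those of maximal power of $a_+$ in each $\Z22$-sector) should force $u = v = 0$. The resulting change-of-basis matrix has a quasi-triangular structure entirely analogous to the one displayed in the proof of Proposition \ref{prop:Mrsv}, and the non-singularity can be verified by induction on $q$ along the same lines. In the $M(r)$ case one also needs to observe that the additional relation produced by $\ket{\chi_{11}}$ (already handled in Proposition \ref{prop:Mrsv}) lives in the appropriate weight spaces without spoiling independence in the range where both dimensions are compatible, and the bound from Step 1 then upgrades to equality, giving $\dim W_{2M+1+q} = 2(q+1)$ together with the claimed basis.
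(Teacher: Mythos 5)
Your Steps 1 and 2 are sound and are genuinely different in organization from the paper's argument: you get the bound $\dim W_{2M+1+q}\le 2(q+1)$ from a PBW count of $U(\n^+)_q$ and then show the listed vectors span by rewriting words in $a_+,\tilde a_+$, whereas the paper proceeds by induction on $q$, applying $a_+,\tilde a_+$ to the level-$q$ basis and using the identities $\tilde a_+(\tilde a_+a_+)^j=-(a_+\tilde a_+)^{j-1}a_+^3$, $a_+(\tilde a_+a_+)^j=(a_+\tilde a_+)^ja_+$, etc. (Two small slips in Step 1: $\tilde L_+$ is not central in $U(\n^+)$ --- it anticommutes with $a_+,\tilde a_+$ since $\{\tilde L_+,a_+\}=\{\tilde L_+,\tilde a_+\}=0$ in \eqref{Z22osp12} --- but this does not affect the count $\dim U(\n^+)_q=q+1$.)

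The genuine gap is Step 3, which is the actual content of the proposition and is only asserted, not proved. First, the claim that ``splitting by $\Z22$-degree decouples the contributions of $P_{01}$ and $P_{10}$'' is false: since $\deg\ket{\chi_{01}}=(0,1)\neq(1,0)=\deg\ket{\chi_{10}}$, a homogeneous component of the relation $u\ket{\chi_{01}}+v\ket{\chi_{10}}=0$ pairs a component of $u$ with a component of $v$ (of degrees differing by $(1,1)$); the two singular vectors never separate, so the subsequent ``extremal monomial'' elimination has to handle genuinely mixed equations, and neither the choice of monomials nor the non-singularity of your ``quasi-triangular'' matrix is actually exhibited --- an appeal to analogy with Proposition \ref{prop:Mrsv} and ``induction along the same lines'' is not a proof. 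Second, for $M(r,\lambda)$ the singular vectors \eqref{201}--\eqref{210} involve both $\ket{0}$ and $\ket{1}$, so they are not of the form $P\ket{0}$; the freeness lift gives a pair of equations in $U(\n^+)$, which your sketch does not account for. Third, and most seriously for your strategy as stated: in $M(r)$ one has $\dim M(r)_{2M+1+q}=2M+2+q<2(q+1)$ as soon as $q>2M$, so the map $(u,v)\mapsto uP_{01}+vP_{10}$ from $U(\n^+)_q^{\oplus 2}$ to $U(\n^+)_{2M+1+q}$ necessarily has a kernel there; hence no argument can ``force $u=v=0$'' uniformly in $q$, and your claim that the Step-1 bound ``upgrades to equality'' must be restricted to $q\le 2M$ for $M(r)$ (the only range actually used in the count $\dim(M(r)/W)=(2M+1)^2$; the statement of Proposition \ref{prop:2} itself elides this, but a proof along your lines would run straight into it). Until the independence of the $2(q+1)$ vectors inside the module is actually established in the relevant range, the proposal does not prove the proposition.
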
 
\begin{proof}
 We prove the proposition by induction. 
 The case of $ q = 0$ corresponds to the lowest weight space $\omega$ itself so that the proposition is true. 
The $q=1$ weight space $ W_{2M+2}$ is constructed by application of $ \tilde{a}_+, a_+$ just once on $ \omega. $ 
The four vectors  $\tilde{a}_{+}\ket{\chi}, a_{+}\ket{\chi} $  are obviously linearly independent so that the proposition holds true for this case, too.

We now suppose that the proposition is true for any integers not greater than $q.$ 
If $q$ is even, the weight space $ W_{2M+1+(q+1)}$ is spanned by the vectors which are obtained by the action of $ \tilde{a}_+, a_+$ just once on the vectors \eqref{Wbasis}
\begin{alignat}{2}
  & \tilde{a}_+ (\tilde{a}_{+}a_{+})^{j}\, a_{+}^{q-2j}\ket{\chi}, 
  & \quad &
   a_+ (\tilde{a}_{+}a_{+})^{j}\, a_{+}^{q-2j}\ket{\chi}, 
  \nn \\[3pt]
  &\tilde{a}_+ (a_{+}\tilde{a}_{+})^{j}\, a_{+}^{q-2j}\ket{\chi}, 
  &  & 
   a_+ (a_{+}\tilde{a}_{+})^{j}\, a_{+}^{q-2j}\ket{\chi}.   
   \label{Wqp1} 
\end{alignat}
One may have more vectors by the action of $ L_+, \tilde{L}_+ $ on $ W_{2M+1+(q-1)}$ 
since they raise the weight by two. 
However, vectors obtained in such a way are not linearly independent from \eqref{Wqp1} 
because of the defining relations \eqref{Z22osp12}: $L_+ \sim a_+^2, \tilde{L}_+ \sim [a_+, \tilde{a}_+]$.
%
%
We here use the following identities (we used \eqref{Z22osp12} for the first identity):
\begin{alignat*}{2}
 \tilde{a}_{+}(\tilde{a}_{+}a_{+})^j &= -(a_{+}\tilde{a}_{+})^{j-1} \,a_{+}^3,
 &\quad 
 \tilde{a}_{+}(a_{+}\tilde{a}_{+})^j &= (\tilde{a}_{+}a_{+})^j \,\tilde{a}_{+}
 \nn \\[3pt]
 a_{+}(\tilde{a}_{+}a_{+})^j &= (a_{+}\tilde{a}_{+})^{j} \,a_{+},
 &
  a_{+}(a_{+}\tilde{a}_{+})^j &= (\tilde{a}_{+}a_{+})^{j} \,a_{+}
\end{alignat*} 
With these identities one may extract the following linearly independent vectors from the vectors given in \eqref{Wqp1}:
\[
  (a_+ \tilde{a}_+)^j \, a_+^{q+1-2j} \ket{\chi}, \qquad
  (\tilde{a}_+ a_+)^j \, a_+^{q+1-2j} \ket{\chi}, \quad
  (\tilde{a}_+ a_+)^{q/2} \, \tilde{a}_+  \ket{\chi}
\]
where $ j$ runs from $1$ to $q/2$ and the last vector is obtained from $ \tilde{a}_+ (a_+ \tilde{a}_+)^{q/2} \ket{\chi}. $ 
Thus if $q$ is even, the proposition also holds true for $q+1.$ 

If $q$ is odd, one may repeat the same argument to prove that the proposition is true for $q+1.$ So we omit to present the detail. 
This completes the proof of Proposition \ref{prop:2}. 
\end{proof}

Proposition \ref{prop:2} enables us to count the dimension of the quotient modules. 
First, let us recall that the dimension of the weight space of $ M(r) $ and $ M(r,\lambda):$
\[
   \dim M_N(r) = N+1, \qquad \dim M_N(r,\lambda) = 2(N+1).
\]
Set $N=2M+1$ then we see from Proposition \ref{prop:2}
\[
  \dim (M(r)/W)_{N+q} = 2M-q, \qquad \dim (M(r,\lambda)/W)_{N+q} = 4M+2.
\] 
It follows that $ \dim (M(r)/W) = (2M+1)^2 $ and $ \dim (M(r,\lambda)/W) = \infty.$ 

Having this result on dimension the proof of Theorem \ref{thm:ILWM} has been completed.

%
%
%
\section{Concluding remarks}

We presented a classification of the irreducible lowest weight modules over the $\Z22$-$osp(1|2).$ 
The major difference from the $osp(1|2)$ Lie superalgebra is the existence of infinite dimensional modules with the degenerate lowest weight. 
The $\Z22$-$osp(1|2) $ also has finite and infinite dimensional modules constructed on a non-degenerate lowest weight vector. 
These are the $\Z22$-graded counterparts of the irreducible modules of $osp(1|2).$ 
Thus the structure of the irreducible modules over the $\Z22$-$osp(1|2) $ is richer than those of $osp(1|2).$  

For this classification we employ the method similar to the strange Lie superalgebra $Q(n)$ 
and the method works quite well. 
This suggests that it is possible to build a general representation theory of $\Z22$-graded simple Lie superalgebras via a procedure similar to simple Lie superalgebras.  
Another important issue which is not dealt with in this work is a classification of unitary representations. 
This class of representations is also important to applications for quantum physics. 
These will be  future works. 

We make a connection of the present results with the $\Z22$-graded version of superconformal mechanics discussed in \cite{AAD}. 
The Hamiltonian of the $\Z22$-$osp(1|2)$ superconformal mechanics corresponds to  $R$ of the present paper and the ground state energy, which is our $r,$ is given as $ \frac{1}{2}(1-2\beta) $ where $\beta$ is the coupling constant of the model. 
The ground state and all the excited states are doubly degenerate. 
This corresponds to the case (iv) of Theorem \ref{thm:ILWM} with $M=0$ so that 
$ \lambda = r^2 = \frac{1}{4} (1-2\beta)^2.$ 
The model of \cite{AAD} is a quantum mechanics of single particle. 
The irreducible module with higher values of $M$ will be realized in  multipartite quantum systems since excitation of different particle creates different excited states. 
A multiparticle quantum system with $\Z22$-graded supersymmetry is discussed in \cite{Topp}. 

It is widely known that the finite dimensional modules of $osp(1|2)$ are realized in various physical problems. 
It will be  interesting to investigate physical realization of the $\Z22$-$osp(1|2)$ counterpart ((ii) of Theorem \ref{thm:ILWM}). 
Finite dimensional irreducible modules of both $osp(1|2)$ and $\Z22$-$osp(1|2)$ are of odd dimension.  
However, it is known that the quantum algebra $ U_q[osp(1|2n)]$ has even dimensional irreducible module \cite{Zou}. 
The even dimensional representations of $U_q[osp(1|2)]$ have a connection with $q$-Hahn and little $q$-Jacobi polynomials and they are used to construct noncommutative spaces \cite{ACNS}. 
In this respect, one may think that investigations of quantum analogue of $\Z22$-$osp(1|2)$ will provide a new perspective to orthogonal polynomials and noncommutative geometry 
(see \cite{BruDupq} non-commutative $\Z22$-graded $q$-plane).

%
%
 \appendices
 \appendixpage
 \section{$\Z22$-graded Lie superalgebra}
 \setcounter{equation}{0}

Here we give the definition of a $\Z22$-graded color superalgebra \cite{rw1,rw2}. 
Let $ \g $ be a vector space  and $ \bm{a} = (a_1, a_2)$ an element of $\Z22$. 
Suppose that $ \g $ is a direct sum of graded components:
\begin{equation}
   \g = \bigoplus_{\bm{a}} \g_{\bm{a}} = \g_{(0,0)} \oplus \g_{(0,1)} \oplus \g_{(1,0)} \oplus \g_{(1,1)}.
\end{equation}
In what follows, we denote homogeneous elements of $ \g_{\bm{a}} $ as $ X_{\bm{a}}, Y_{\bm{a}},
Z_{\bm{a}}$. 
If $\g$ admits a bilinear operation (the general Lie bracket), denoted by $ \llbracket \cdot, \cdot \rrbracket, $ 
satisfying the identities
\begin{align}
  & \llbracket X_{\bm{a}}, Y_{\bm{b}} \rrbracket \in \g_{\bm{a}+\bm{b}}
  \\[3pt]
  & \llbracket X_{\bm{a}}, Y_{\bm{b}} \rrbracket = -(-1)^{\bm{a}\cdot \bm{b}} \llbracket Y_{\bm{b}}, X_{\bm{a}} \rrbracket,
  \\[3pt]
  & (-1)^{\bm{a}\cdot\bm{c}} \llbracket X_{\bm{a}}, \llbracket Y_{\bm{b}}, Z_{\bm{c}} \rrbracket \rrbracket
    + (-1)^{\bm{b}\cdot\bm{a}} \llbracket Y_{\bm{b}}, \llbracket Z_{\bm{c}}, X_{\bm{a}} \rrbracket \rrbracket
    + (-1)^{\bm{c}\cdot\bm{b}} \llbracket Z_{\bm{c}}, \llbracket X_{\bm{a}}, Y_{\bm{b}} \rrbracket
\rrbracket =0, 
    \label{gradedJ}
\end{align}
where
\begin{equation}
  \bm{a} + \bm{b} = (a_1+b_1, a_2+b_2) \in \Z22, \qquad \bm{a}\cdot \bm{b} = a_1 b_1 + a_2 b_2\in {\mathbb Z}_2,
\end{equation}
then $\g$ is referred to as a $\Z22$-graded color superalgebra. 

The enveloping algebra of $\g$ is the $\Z22$-graded unital associative algebra with relations
\begin{equation}
  X_{\bm{a}} Y_{\bm{b}} - (-1)^{\bm{a}\cdot \bm{b}} Y_{\bm{b}} X_{\bm{a}} = \llbracket X_{\bm{a}}, Y_{\bm{b}} \rrbracket.
  \label{gradedcom}
\end{equation}
One key observation is that for homogeneous elements, their general Lie bracket will coincide with either a commutator $(\bm{a}\cdot \bm{b}=0)$ or anticommutator $(\bm{a}\cdot \bm{b}=1)$ .  
In the main body of this paper, we also used the notation 
$ [X_{\bm{a}}, Y_{\bm{b}}] $ (in case $ \bm{a}\cdot \bm{b}=0$) and 
$ \{ X_{\bm{a}}, Y_{\bm{b}}  \} $ (in case $ \bm{a}\cdot \bm{b}=1$) 
for the general Lie bracket in order to emphasize that given elements commute or anticommute.
%
%
It should be noted that  $ \g_{(0,0)} \oplus \g_{(0,1)} $ and $ \g_{(0,0)} \oplus \g_{(1,0)} $ are
subalgebras of $\g$ (with $\Z22$-grading).
We remark that this is a natural generalization of Lie superalgebra which is defined with a $\mathbb{Z}_2$-graded structure:
\begin{equation}
  \g =  \bigoplus_{\bm{a}} \g_{\bm{a}} = \g_{(0)} \oplus \g_{(1)},
\end{equation}
instead with 
\begin{equation}
  \bm{a} + \bm{b} = (a+b)\in {\mathbb Z}_2, \qquad \bm{a} \cdot \bm{b} = ab\in {\mathbb Z}_2.
\end{equation} 
 
%
%
\section*{Data Availability}

Data sharing is not applicable to this article as no new data were created or analyzed in this study.
 
%

\end{document}